\newtheorem{theorem}{Theorem}
\newtheorem{corollary}[theorem]{Corollary}
\newtheorem{lemma}[theorem]{Lemma}
\newtheorem{proposition}[theorem]{Proposition}
\newtheorem{remark}[theorem]{Remark}
\newenvironment{proof}[1][Proof]{\textbf{#1.} }{\ \rule{0.5em}{0.5em}}
\newcommand*{\pd}
[2]{\mathchoice{\frac{\partial#1}{\partial#2}}
  {\partial#1/\partial#2}{\partial#1/\partial#2}
  {\partial#1/\partial#2}}
\newcommand*{\fd}
[2]{\mathchoice{\frac{\delta#1}{\delta#2}}
  {\delta #1/\delta#2}{\delta#1/\delta#2}{\delta#1/\delta#2}}
\begin{document}
\title{Projective geometry of homogeneous\\
  second order Hamiltonian operators}
\author{Pierandrea Vergallo$^{1,3}$, \qquad Raffaele Vitolo$^{2,3}$ \\[3mm]
  \small $^1$  Department of Mathematical, Computer,\\
  \small Physical and Earth Sciences\\
  \small University of Messina (Italy), \\
  \small V.le F. Stagno D’Alcontres 31, I-98166 Messina, Italy\\
 \small $^2$ Department of Mathematics and Physics \textquotedblleft E. De
  Giorgi\textquotedblright ,\\
  \small Universit\`a del Salento, Lecce, Italy\\
 \small $^3$ Istituto Nazionale di Fisica Nucleare, Sez.\ Lecce
  \\
  \texttt{pierandrea.vergallo@unime.it}
  \\
  \texttt{raffaele.vitolo@unisalento.it} }
\maketitle
\begin{abstract}
  We prove the invariance of homogeneous second-order Hamiltonian operators
  under the action of projective reciprocal transformations. We establish a
  correspondence between such operators in dimension $n$ and $3$-forms in
  dimension $n+1$. In this way we classify second order Hamiltonian operators
  using the known classification of $3$-forms in dimensions $\leq 9$. Systems
  of first-order conservation laws that are Hamiltonian with respect to such
  operators are also explicitly found. The integrability of the systems is
  discussed in detail.
\end{abstract}

\newpage

\tableofcontents

\allowdisplaybreaks[3]

\section{Introduction}

Hamiltonian operators play a fundamental role in the theory of integrability of
partial differential equations (PDEs). A distinguished class of Hamiltonian
operators was introduced in 1983 by Dubrovin and Novikov~\cite{DN83}. One of
the main features of the new class was that it was invariant under
diffeomorphisms of the underlying space, thus bringing geometry into the theory
of integrable PDEs.

More precisely, let us denote by $u^i=u^i(t,x)$ $n$ unknown functions of two
independent variables $t$ and $x$, $i=1$, \dots, $n$, and denote by
$u^i_\sigma$ the $x$-derivative of $u^i$ $\sigma$ times.  An element of the
class has the form of a matrix of first-order differential operators where each
summand contains the same number of $x$-derivatives ($1$ in this case):
\begin{equation}
  \label{eq:27}
  P^{ij}=g^{ij}\partial_x + \Gamma^{ij}_k u^k_x.
\end{equation}
Then, a partial differential equation of the form
$u^i_t = f^i(u^j,u^j_\sigma)$ is Hamiltonian if there exists a density
$H=\int h(u^j)dx$ such that
\begin{equation}
  \label{eq:31}
  u^i_t = f^i = P^{ij}\fd{H}{u^j}.
\end{equation}
The typical situation is when $f^i = V^i_j(u^k)u^j_x$, \emph{i.e.} the system
of PDEs is quasilinear and of the first order. In that case, if the system of
PDEs is Hamiltonian, it is straightforward to realize that both the system and
its Hamiltonian formulation are form-invariant with respect to a transformation
involving the dependent variables only: $\bar{u}^i=\bar{u}^i(u^j)$. We recall
that the Hamiltonian property of $P$ is the fact that it induces a Poisson
bracket on the space of densities:
\begin{equation}
  \label{eq:33}
  \{F,G\}_P = \int \fd{f}{u^i}P^{ij}\fd{g}{u^j}dx,
\end{equation}
where $F=\int f\, dx$ and $G=\int g\,dx$. The Poisson bracket property is also
invariant with respect to the above transformations.

The first order case was soon generalized to the higher order case in
\cite{DubrovinNovikov:PBHT}, the homogeneity degree being equal to the order of
the operators. Numerous examples shown that homogeneous Hamiltonian operators
(HHOs) are ubiquitous, either as stand-alone operators or in linear
combinations with operators of different homogeneity degrees. See
\cite{LSV:bi_hamil_kdv} for many examples of the latter kind.

Recently, it was observed that third-order homogeneous Hamiltonian operators
are invariant (when in their canonical form, see \cite[eq.\ (2)]{FPV14}) with
respect to a non-obvious class of transformations, namely projective reciprocal
transformations \cite{FPV14}. They have the form of a projective transformation
of dependent variables coupled with a non-local transformation of $x$:
\begin{equation}
  \label{eq:32}
  \tilde{u}^i = \frac{A^i_j u^j + A^0_j}{A^0_j u^j + A^0_0},\qquad
  d\tilde{x} = (A^0_j u^j + A^0_0)dx.
\end{equation}
Third-order homogeneous Hamiltonian operators have been classified in low
dimensions ($n\leq 4$, \cite{FPV14,FPV16}) with respect to the action of the
above group. It was also proved that there exists a multiparametric family of
quasilinear systems of first-order PDEs that are Hamiltonian with respect to
any such operator \cite{FPV17:_system_cl}.

\emph{The goal of the current paper is to prove that second-order homogeneous
Hamiltonian operators have the same invariance properties of third-order
homogeneous Hamiltonian operators.}

The interest in such a result is that projective-geometric invariance is not
just an `isolated' feature of third-order operators: being also a property of
second-order operators it is reasonable to think that it is something bound to
all homogeneous operators.

A classification then follows from the above invariance result; the algebraic
variety that is identified with a second-order operator has been extensively
studied in \cite{DePoiFaenziMezzettiRanestad17} (linear line congruence), and
has a different nature with respect to that which is associated with a
third-order operator (quadratic line complex). We also obtain similar (but not
identical!) results concerning associated systems of first-order PDEs.

More precisely, second order homogeneous Hamiltonian operators have the general
form
\begin{equation}
  P^{ij}=g^{ij}\partial_x^2+b^{ij}_ku^k_x\partial_x+c^{ij}_ku^{k}_{xx}
  +c^{ij}_{kh}u^k_xu^h_x.
\end{equation}
We will always consider the non-degenerate case $\det (g^{ij})\neq 0$.  Under a
coordinate transformation of the type $\bar{u}^i = \bar{u}^i(u^j)$ the symbols
$\Gamma_{ij}^k = - g_{ip}c^{pk}_j$ transform as a linear connection.  It is
proved in~\cite{potemin86:_poiss,doyle93:_differ_poiss} (but see
also~\cite{mokhov98:_sympl_poiss,ferguson07:_flat_hamil}) that the Hamiltonian
property of the above operator implies that $\Gamma_{ij}^k$ is symmetric and
flat. With respect to flat coordinates the operator can be rewritten as
\begin{equation}\label{1b}
  P^{ij}=\partial_xg^{ij}\partial_x.
\end{equation}
The Hamiltonian property in flat coordinates is then equivalent to the fact
that
\begin{equation}
  \label{aa3}g_{ij}=T_{ijk}u^k+g_{ij}^0,
\end{equation}
where $T_{ijk}$ are constant and skew-symmetric with respect to $i$, $j$, $k$
and $g^0_{ij}$ is constant and skew-symmetric with respect to $i$, $j$.
The above equations~\eqref{1b}, \eqref{aa3} have been independently found in
\cite{potemin86:_poiss,doyle93:_differ_poiss}. See
also~\cite{mokhov98:_sympl_poiss} for a thorough review on homogeneous
Hamiltonian operators, and see \cite{ferguson07:_flat_hamil} for a further
differential-geometric analysis of the properties of homogeneous second-order
Hamiltonian operators and their pencils.

Here, we will prove the following theorem (Theorem~\ref{t2} on page
\pageref{t2}).
\begin{theorem}
  Second-order homogeneous Hamiltonian operators are invariant under projective
  reciprocal transformations~\eqref{eq:32}.
\end{theorem}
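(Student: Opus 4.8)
The plan is to reduce to flat coordinates and to encode the operator's data as a constant $3$-form in one extra dimension. By the results of Potemin and Doyle quoted above it suffices to take $P^{ij}=\partial_x g^{ij}\partial_x$ with $g_{ij}=T_{ijk}u^k+g^0_{ij}$, $T_{ijk}$ totally skew-symmetric and $g^0_{ij}$ skew-symmetric (both constant), and to show that \eqref{eq:32} carries it to another operator of the same type, after passing to the flat coordinates of the image if necessary. On $\mathbb{R}^{n+1}$ with coordinates $U^0,\dots,U^n$ let $\omega$ be the constant $3$-form with $\omega_{ijk}=T_{ijk}$ and $\omega_{0ij}=g^0_{ij}$; in the affine chart $U^0=1$, $U^i=u^i$ one has $(\iota_{\vec U}\omega)_{ij}=g_{ij}(u)$, where $\vec U=U^A\partial_{U^A}$ is the Euler vector field. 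The point of this encoding is that \eqref{eq:32} is exactly the affine-chart form of the natural action of $\mathrm{GL}(n+1)$: a matrix $M$ sends $(1,u)$ to a vector whose $0$-th component is $\lambda=A^0_0+A^0_ju^j$ and whose normalisation is $\tilde u^i=(A^i_0+A^i_ju^j)/\lambda$, with $\lambda$ the Jacobian of $x\mapsto\tilde x$. Since $\mathrm{GL}(n+1)$ obviously preserves the space of constant $3$-forms, it will suffice to show that the image operator is again of the form $\partial_{\tilde x}\tilde g^{ij}\partial_{\tilde x}$ with $\tilde g_{ij}=(\iota_{\vec{\tilde U}}\tilde\omega)_{ij}$ for the transformed $3$-form $\tilde\omega$: total skew-symmetry of the new linear part and skew-symmetry of the new constant part are then automatic.

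I would carry out the verification on generators of the projective group. For an affine change $\tilde u^i=A^i_ju^j+a^i$ (so $\lambda\equiv1$ and $d\tilde x=dx$) the Jacobian is constant, the transformation law for Hamiltonian operators gives at once $\tilde P^{ij}=\partial_x(A^i_pA^j_qg^{pq})\partial_x$, and substituting $u=A^{-1}(\tilde u-a)$ shows $\tilde g_{ij}$ is again affine in $\tilde u$ with totally skew-symmetric linear part; equivalently $\tilde\omega$ is the pullback of $\omega$ by the corresponding block matrix. There remains one non-affine generator, for instance $\tilde u^1=1/u^1$, $\tilde u^i=u^i/u^1$ ($i\ge2$) with $d\tilde x=u^1\,dx$, corresponding to swapping $U^0$ with $U^1$. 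For this I would substitute the explicit Jacobian $J^i_p=\partial\tilde u^i/\partial u^p$ together with $\partial_x=u^1\partial_{\tilde x}$ into $\tilde P^{ij}=J^i_p\circ\partial_x g^{pq}\partial_x\circ J^j_q$, expand, and collect powers of $\partial_{\tilde x}$.

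The main obstacle is precisely this last expansion. A priori the image is a full second-order operator with independent first- and zeroth-order coefficients, and one must check that the zeroth-order part vanishes, that the first-order part is exactly the one produced by $\partial_{\tilde x}\hat g^{ij}\partial_{\tilde x}$, and finally that $\hat g_{ij}$ has the affine-plus-totally-skew form. These cancellations fail for a generic point transformation or a generic reciprocal rescaling: they occur here only because the $1/u^1$ factors in $J$ are exactly compensated by the $u^1$ coming from $\partial_x=u^1\partial_{\tilde x}$, and because the linear part of $g$ is totally (not merely pairwise) skew-symmetric, the antisymmetry in the third index being what removes the terms that would otherwise break the canonical form. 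Once the inversion is checked, composition with affine maps gives the result for all of \eqref{eq:32}, and the image is a second-order homogeneous Hamiltonian operator by the characterisation \eqref{1b}--\eqref{aa3}; non-degeneracy $\det(\tilde g^{ij})\ne0$ is preserved on the open locus where the birational transformation is defined, since there both $J$ and multiplication by $\lambda$ are invertible.
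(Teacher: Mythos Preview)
Your overall strategy matches the paper's: the paper also reduces to the transformation $\tilde u^i=u^i/A$, $d\tilde x=A\,dx$ (affine changes being trivial), and the $3$-form encoding you describe is exactly what the paper uses afterwards (their Lemma on $T_{\lambda\mu\nu}$ and the corollary that $g$ changes by the conformal factor $1/A^3$) to conclude that the new leading coefficient again has the form \eqref{aa3}. So those parts are fine.

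The gap is in the key step. The rule $\tilde P^{ij}=J^i_p\circ P^{pq}\circ J^j_q$ with $\partial_x\mapsto\lambda\,\partial_{\tilde x}$ is the transformation law for a Hamiltonian operator under a \emph{point} change of dependent variables; it is not the correct rule once the independent variable is changed nonlocally by $d\tilde x=\lambda(u)\,dx$. Under such a reciprocal transformation the density transforms as $f=A\tilde f$, so $\partial f/\partial u^p=a^{n+1}_p\tilde f+A\,J^k_p\,\partial\tilde f/\partial\tilde u^k$: the variational derivative picks up an extra term $a^{n+1}_p\tilde f$ beyond the Jacobian, and there is also the measure factor $dx=A^{-1}d\tilde x$. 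Concretely, for $n=2$ with $g^{ij}=\bigl(\begin{smallmatrix}0&-1\\1&0\end{smallmatrix}\bigr)$ and your inversion generator, your formula gives leading coefficient $A^2J\,g\,J^T$, whose inverse is $\tilde g_{ij}\propto 1/\tilde u^1$, \emph{not} affine in $\tilde u$; the correct transformed leading coefficient is $A^3J\,g\,J^T$, which is constant. The missing factor of $A$ and the extra density terms are precisely what the paper tracks by working at the level of the Poisson bracket $\{F,H\}=\int f_iP^{ij}h_j\,dx$, substituting $f_i=a^{n+1}_i\tilde f+A\tilde f_i$ and $\partial_x=A\partial_{\tilde x}$, and then integrating by parts using the identity $\partial_{\tilde x}\big[(\delta^l_j-a^{n+1}_j\tilde u^l)\tilde h_{,\tilde u^l}+a^{n+1}_j\tilde h\big]=(\delta^l_j-a^{n+1}_j\tilde u^l)\partial_{\tilde x}\tilde h_{,\tilde u^l}$ to reassemble everything into $\partial_{\tilde x}\tilde g^{ij}\partial_{\tilde x}$. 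Once you replace your operator-level formula by this bracket computation, the rest of your plan (checking on a single generator, then invoking the $3$-form picture to verify that $\tilde g_{ij}$ is again of the form \eqref{aa3}) goes through.
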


Then, we will prove a result that enables us to classify second-order
homogeneous Hamiltonian operators in low dimensions ($n\leq 8$)
(Theorem~\ref{th:corresp} on page \pageref{th:corresp}).

\begin{theorem}
  Second-order homogeneous Hamiltonian operators in dimension $n$ can be put in
  bijection with $3$-forms in the $n+1$-dimensional space $\mathbb{C}^{n+1}$.

  A projective reciprocal transformation induces an $SL(n+1)$-transformation on
  the corresponding $3$-form which commutes with the action on the
  corresponding second-order homogeneous Hamiltonian operator.
\end{theorem}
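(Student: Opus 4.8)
The plan is to reduce the statement to multilinear algebra by working in the flat normal form. By the results recalled around~\eqref{1b}--\eqref{aa3}, once flat coordinates are chosen a second-order homogeneous Hamiltonian operator is completely encoded by the pair $(T_{ijk},g^0_{ij})$ consisting of a constant totally skew-symmetric $3$-tensor and a constant skew-symmetric $2$-tensor on $\mathbb{C}^n$. The splitting $\mathbb{C}^{n+1}=\mathbb{C}\,e_0\oplus\mathbb{C}^n$, with basis $e_0,\dots,e_n$ and dual basis $\omega^0,\dots,\omega^n$, yields the decomposition $\Lambda^3(\mathbb{C}^{n+1})^*\cong\Lambda^3(\mathbb{C}^n)^*\oplus\Lambda^2(\mathbb{C}^n)^*$, under which one associates to $(T,g^0)$ the $3$-form
\[
  \Omega=\tfrac1{6}T_{ijk}\,\omega^i\wedge\omega^j\wedge\omega^k+\tfrac12 g^0_{ij}\,\omega^0\wedge\omega^i\wedge\omega^j ,
\]
that is, $\Omega_{ijk}=T_{ijk}$ and $\Omega_{0ij}=g^0_{ij}$. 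This assignment is manifestly bijective, which proves the first assertion. It is worth recording the intrinsic description used below: contracting $\Omega$ with the Euler vector field $E=v^\alpha\,\partial/\partial v^\alpha$ of $\mathbb{C}^{n+1}$ and restricting the resulting $2$-form to the affine chart $v^0=1$, $v^i=u^i$, one recovers exactly the metric $g_{ij}(u)=T_{ijk}u^k+g^0_{ij}$ of~\eqref{aa3} as the $(i,j)$-block of $\iota_E\Omega$; in other words $\Omega$ is the homogenization of the affine metric $g_{ij}(u)$.

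For the equivariance, let a projective reciprocal transformation~\eqref{eq:32} be given, with matrix $A=(A^\alpha_\beta)_{\alpha,\beta=0}^n$. Since over $\mathbb{C}$ one may always rescale $A$ so that $\det A=1$ without altering the transformation, we take $A\in SL(n+1)$. In the proof of Theorem~\ref{t2} the transformed operator is brought back to flat form $\tilde P^{ij}=\partial_{\tilde x}\tilde g^{ij}\partial_{\tilde x}$, and that computation supplies the explicit transformation law for the metric: the point part of~\eqref{eq:32} acts on the contravariant object $g^{ij}$ tensorially through the Jacobian of the fractional-linear map, while the reciprocal part $d\tilde x=\rho\,dx$, with $\rho=A^0_ju^j+A^0_0$, contributes a conformal factor in $\rho$. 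The decisive step is to show that, after writing the affine coordinates as $u^i=v^i/v^0$ and homogenizing $g_{ij}(u)$ into $\Omega$ as above, this combined law is exactly the standard action $\Omega\mapsto(A^{-1})^{*}\Omega$ of $SL(n+1)$ on $\Lambda^3(\mathbb{C}^{n+1})^*$. The mechanism is that $\rho$ is precisely the $0$-component $(Av)^0$ of the transformed homogeneous coordinates, so that the powers of $\rho$ produced by the fractional-linear Jacobian and by the reciprocal factor combine with the Jacobian entries to reassemble the linear action of $A$ on $3$-forms; the $\iota_E\Omega$ picture makes this transparent, since $A$ commutes with $E$ and the only correction is the rescaling $v\mapsto Av/(Av)^0$ needed to return to the chart $v^0=1$. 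Carrying out this matching and reading off the $\Lambda^3(\mathbb{C}^n)^*$ and $\Lambda^2(\mathbb{C}^n)^*$ blocks identifies $\tilde T$ and $\tilde g^0$ with those of $(A^{-1})^*\Omega$.

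Once both actions are recognized as the same map, the final "commuting" assertion is automatic: the bijection $(T,g^0)\mapsto\Omega$ intertwines the action of projective reciprocal transformations on second-order homogeneous Hamiltonian operators with the $SL(n+1)$-action on $3$-forms of $\mathbb{C}^{n+1}$, and this is exactly what allows the classification of the former to be read off from the known classification of the latter for $n+1\leq 9$.

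The main obstacle is the bookkeeping in the equivariance step: one must track the powers of the denominator $\rho=A^0_ju^j+A^0_0$ coming from the fractional-linear Jacobian against the conformal weight produced by the reciprocal transformation of $x$, keeping the signs and the precise placement of the distinguished index $0$ under control, so that the net effect is literally the $\Lambda^3$-action and in particular lands in $SL(n+1)$. A secondary and minor point is to check that the residual scalar ambiguity in $A$ rescales $\Omega$ and $P$ by the same overall factor, so that the normalization $\det A=1$ is harmless.
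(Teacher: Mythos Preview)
Your proposal is correct and follows essentially the same route as the paper. Your bijection $\Omega_{ijk}=T_{ijk}$, $\Omega_{0ij}=g^0_{ij}$ is exactly the paper's extension~\eqref{eq:8} (with the distinguished index labelled $0$ rather than $n+1$), and your Euler-field contraction $\iota_E\Omega$ is precisely the paper's map $i(\omega)$ of~\eqref{eq:20}--\eqref{eq:21}. For the equivariance, the paper carries out explicitly (Lemma~\ref{lemma:projrec}, using the formulas~\eqref{eq:9}--\eqref{eq:13}) the index computation that you describe conceptually: the powers of $\rho$ from the fractional-linear Jacobian and the reciprocal factor assemble into the linear $SL(n+1)$-action on the extended tensor $T_{\lambda\mu\nu}$, up to the conformal factor $1/\rho^3$. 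Your outline of this step is accurate, though the paper's proof is that computation rather than an appeal to the invariance of $E$ under linear maps.
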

It is interesting to observe that, in a generic situation, $3$-forms define
\emph{linear line congruences} (see \cite{DePoiFaenziMezzettiRanestad17} for
the algebraic geometric description and properties of that correspondence),
hence second-order homogeneous Hamiltonian operators are in correspondence with
algebraic varieties, as it happened in the third-order case (for different
algebraic varieties, \emph{i.e.} quadratic line complexes).

Let us consider a quasilinear first-order system of PDEs in conservative form
\begin{equation}\label{2a}
  u^i_t = (V^i)_x,
\end{equation}
where $V^i = V^i(u^j)$. We will prove the following Theorem (which is the union
of the statements of Theorem~\ref{th:solut-comp-cond},
Corollary~\ref{cor:lindeg}, Proposition~\ref{pr:hamiltonian-systems-2} and
Theorem~\ref{haant}).
\begin{theorem}
  Every second-order homogeneous Hamiltonian operator is the Hamiltonian
  operator of a multiparameter family of systems of conservation laws as
  in~\eqref{2a}. The systems are linearly degenerate, possess a non-local
  Hamiltonian and have identically vanishing Haantjes tensor.
\end{theorem}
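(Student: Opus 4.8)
The plan is to exploit the canonical form~\eqref{1b}--\eqref{aa3} of a second-order homogeneous Hamiltonian operator and translate the Hamiltonian condition~\eqref{eq:31} with a linear density into an explicit algebraic system for the flux $V^i$. Writing $H=\int h(u)\,dx$, the variational derivative of a \emph{linear} density $h=\alpha_i u^i$ is just the constant covector $\alpha_i$, so $P^{ij}\fd{H}{u^j}=\partial_x g^{ij}(u)\alpha_j\partial_x(1) = 0$ trivially; hence to obtain nontrivial flows one must use affine/quadratic densities, or equivalently allow the ``Hamiltonian'' to be of the form $\int(\beta+\alpha_i u^i)\,dx$ paired with the operator acting to produce a total $x$-derivative. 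Concretely, I would look for $V^i$ such that $(V^i)_x = \partial_x\big(g^{ij}(u)\,\partial_x(\psi_j(u))\big)$ for a suitable covector field $\psi_j(u)$ coming from the variational derivative of a second-order density; comparing coefficients of $u^k_{xx}$ and $u^k_x u^h_x$ yields first that $V^i = g^{ij}(u)\,\partial_j\phi$ for some potential $\phi(u)$ (this is the content of Theorem~\ref{th:solut-comp-cond}) and then a linear overdetermined PDE for $\phi$ whose compatibility is guaranteed by skew-symmetry of $T_{ijk}$ and flatness. The multiparameter family arises because the solution space of this linear system for $\phi$ (equivalently, for the pair $(\psi_j)$) is finite-dimensional of positive dimension.

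Next I would establish linear degeneracy. Recall a quasilinear system $u^i_t=V^i_j u^j_x$ is linearly degenerate if the directional derivative of each eigenvalue of $(V^i_j)$ along its own right eigenvector vanishes. The plan is not to diagonalize $(V^i_j)=(\partial_j V^i)$ directly but to use the Hamiltonian structure: since $V^i=g^{ij}\psi_j$ with $g^{ij}$ linear in $u$ and $\psi_j$ affine in $u$ (so that $\partial_k\partial_l V^i$ is essentially controlled by $T$), differentiate $\det(V^i_j-\lambda\,\delta^i_j)$ and the eigenvector equation and show that the ``genuine nonlinearity coefficient'' $r^k\partial_k\lambda$ collapses because the second derivatives of $V^i$ are built from the totally skew tensor $T_{ijk}$; skew-symmetry forces the relevant contraction with $r\otimes r$ to vanish. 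This is precisely Corollary~\ref{cor:lindeg}. The non-local Hamiltonian claim (Proposition~\ref{pr:hamiltonian-systems-2}) then follows by writing the system in Hamiltonian form with the nonlocal operator obtained from $P$ after a reciprocal change of the independent variable $x$ — the projective-reciprocal invariance of Theorem~\ref{t2} is what guarantees the transformed operator is again Hamiltonian (albeit nonlocal), and one reads off the corresponding nonlocal Hamiltonian functional.

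Finally, for the vanishing of the Haantjes tensor (Theorem~\ref{haant}) I would compute, for $L^i_j=V^i_j=\partial_j V^i$, first the Nijenhuis tensor
\begin{equation*}
  N^i_{jk}=L^p_j\partial_p L^i_k-L^p_k\partial_p L^i_j-L^i_p(\partial_j L^p_k-\partial_k L^p_j),
\end{equation*}
and then contract into the Haantjes tensor
\begin{equation*}
  H^i_{jk}=N^i_{pq}L^p_j L^q_k - N^p_{jq}L^i_p L^q_k - N^p_{qk}L^i_p L^q_j + N^p_{jk}L^i_q L^q_p .
\end{equation*}
Using $V^i=g^{ij}\psi_j$ and the fact that all second derivatives of $V^i$ are proportional to $T$, the Nijenhuis tensor simplifies to an expression linear in $T$; its contraction in the Haantjes combination then vanishes identically by the total skew-symmetry of $T_{ijk}$ together with the flatness relations encoded in~\eqref{aa3}.

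I expect the main obstacle to be the bookkeeping in this last step: the Haantjes tensor is a quartic expression in $L$ with many index contractions, and showing the cancellation requires organizing the computation so that the skew-symmetry of $T$ is manifest — for instance by passing to flat coordinates in which $g_{ij}$ is affine and $V^i$ is a gradient, and by systematically replacing $\partial_k g^{ij}$ by $-g^{ia}g^{jb}T_{abk}$. A clean way to avoid brute force is to first prove the weaker statement that the Nijenhuis tensor of $L$ is itself totally controlled by $T$ (a one-line consequence of $V^i$ being a $g$-gradient with affine $g$), reducing the Haantjes vanishing to a purely algebraic identity among $T_{ijk}$, $g^{ij}$ and the constants defining $\psi_j$; that identity should follow from antisymmetrizing over the three lower indices of $T$.
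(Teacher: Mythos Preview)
Your proposal has the right overall architecture for the Haantjes computation, but there are two genuine errors that would derail the argument.

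First, your claim that Theorem~\ref{th:solut-comp-cond} yields $V^i=g^{ij}\partial_j\phi$ for some potential $\phi$ is incorrect. The paper does not solve the compatibility conditions by hunting for Hamiltonian densities; it works directly with the algebraic system~\eqref{eq:37}, sets $W_j=g_{jk}V^k$, and shows that $(W_j)_{,ab}=0$ and $W_{j,p}+W_{p,j}=0$. The conclusion is $W_j=A_{jl}u^l+B_j$ with $A_{jl}$ \emph{skew-symmetric}, so $W_j$ is precisely \emph{not} a gradient (except in the trivial case $A=0$). Your later invocations of ``$V^i$ being a $g$-gradient'' therefore rest on a false premise. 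The Nijenhuis and Haantjes cancellations do go through, but via the compatibility conditions~\eqref{eq:451}--\eqref{eq:38} and the identity $g^{ia}_{,k}g_{aj}=-g^{ia}T_{ajk}$, not via a potential $\phi$. The paper obtains the closed form~\eqref{eq:6} for $N^i_{jk}$ and then pairs off the twelve summands of $H^i_{jk}$ using~\eqref{eq:451} in both its covariant and contravariant forms; your ``antisymmetrize over the three lower indices of $T$'' is too coarse a description of what actually happens.

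Second, your route to the non-local Hamiltonian is wrong. Theorem~\ref{t2} says that projective reciprocal transformations \emph{preserve} the local canonical form~\eqref{1b}; they do not produce a nonlocal operator, and the paper does not use them here at all. Instead, Proposition~\ref{pr:hamiltonian-systems-2} passes to potential coordinates $u^i=b^i_x$, in which $P$ becomes the ultralocal operator $-g^{ij}(b^k_x)$ and the equation $\delta H/\delta b^k=-A_{ks}b^s_x-B_k$ is solved directly to give $H=-\int(\tfrac12 A_{sl}b^l_x+B_s)b^s\,dx$. The ``non-locality'' refers to the appearance of $b^s=\partial_x^{-1}u^s$ in $H$, not to any transformation of the operator.

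Finally, for linear degeneracy the paper takes a different and cleaner path than your direct eigenvalue computation: it shows that the associated line congruence in $\mathbb{P}^{n+1}$ is linear (the relation $W_j=g_{jk}V^k$ gives $n$ linear Pl\"ucker relations) and then cites the Agafonov--Ferapontov correspondence between linear congruences and linearly degenerate systems. Your direct approach might be made to work, but note that by Corollary~\ref{cor:eigenvalues} every eigenvalue has multiplicity two, so the naive ``differentiate $\lambda$ along $r$'' argument needs care.
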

Strictly speaking, we do not have a general proof of integrability of the
systems, but it turns out that in many examples they are indeed integrable.
Indeed, in many computational experiments where we randomly chose the
parameters it turned out that the systems are diagonalizable, and, by known
results \cite{sevennec93:_geomet}, they have a maximal set of conservation
laws. Moreover, until now we did not find any counterexamples to integrability
in our experiments. See the discussion at the end of
Subsection~\ref{sec:haantj-tens-integr}.

\medskip

We would like to stress that homogeneous Hamiltonian operators are important
building blocks in the theory of integrable systems. We can mention several
ways in which they are involved:
\begin{itemize}
\item Many bi-Hamiltonian systems have a bi-Hamiltonian pair of the form
  $P=P_1 + R$ and $Q=Q_1$, where $P_1$, $Q_1$ are compatible homogeneous
  first-order Hamiltonian operators and $R$ is a homogeneous second-order or
  third-order Hamiltonian operator which is compatible with $P_1$ and $Q_1$. We
  call such systems \emph{bi-Hamiltonian systems of KdV
    type}~\cite{LSV:bi_hamil_kdv}. Examples include the AKNS (or two-boson)
  hierarchy, the two-component Camassa-Holm
  hierarchy~\cite{falqui06:_camas_holm}, a multiparameter family defined
  in~\cite{sole15:_hamil_pdes} (see \cite{LSV:bi_hamil_kdv}) and the
  Kaup--Broer system~\cite{Kuper_85} when
  \begin{equation}
    \label{eq:34}
    R=
    \begin{pmatrix}
      0 & -1 \\ 1 & 0
    \end{pmatrix}\partial_x^2.
  \end{equation}
  Other examples with $R$ a third-order HHO are the KdV equation, the
  Camassa-Holm equation~\cite{LSV:bi_hamil_kdv}, a dispersive water waves
  equation~\cite{antonowicz89:_factor_scroed} and a coupled Harry--Dym
  hierarchy~\cite{antonowicz88:_coupl_harry_dym}. See also
  the recent papers \cite{bolsinov:_applic_nijen_iii,konyaev:_geomet_poiss},
  with a differential-geometric focus on the same construction.
\item Fewer systems are determined by a pair of Hamiltonian operators of the
  form $P=P_1$ and $R$; here we mention the WDVV systems
  \cite{vasicek21:_wdvv_hamil}, where $R$ is of the third order. No instances
  of systems that we determined in this paper were previously known to our
  knowledge. A probable explanation is that the first non-trivial systems
  (although linearizable) appear in dimension $4$, and non-linearizable ones in
  dimension $6$ and greater, and that makes their investigation quite
  complicated.
\item Homogeneous operators play a central role in Dubrovin--Zhang's
  perturbative approach to the classification of integrable systems under the
  action of the group of Miura transformations
  \cite{dubrovin01:_normal_pdes_froben_gromov_witten}. Since deformations of a
  first-order Poisson pencil are given as a formal series of homogeneous
  operators, one might expect that projective transformations and invariance
  can play a role.
\end{itemize}

The results obtained so far show that \emph{the group of projective reciprocal
  transformations act on hierarchies defined by trios of compatible operators
  $P_1$, $Q_1$, $R$ or by pairs of compatible operators $P_1$, $R$}. The action
preserves the locality of second-order or third-order HHOs in canonical form,
even if it does not preserve the locality of $P_1$, $Q_1$. So, a projective
geometric study of the above hierarchies makes sense and is potentially
interesting. 

The chances that the projective invariance properties that are shared by
second-order and third-order HHOs might be generalized to HHOs of arbitrary
order are high enough to consider that possibility in the framework of the
perturbative approach in~\cite{dubrovin01:_normal_pdes_froben_gromov_witten}.

More generally, our results might indicate that a projective-geometric approach
to integrable systems is starting to emerge in the field. We will pursue that
research line in the future.

\section{Projective geometry and Hamiltonian operators}
\label{sec:second-order-homog}

Let us consider a projective transformation
$T\colon\mathbb{P}^n\rightarrow \mathbb{P}^n$. We will treat $(u^i)$ as an
affine chart of the homogeneous coordinates $[\mathbf{v}]=[v^0,\ldots,v^n]$,
where $v^i=u^i/u^{n+1}$. In homogeneous coordinates we have
$T(v)=[a^\lambda_\mu v^\mu]$, where $(a^\lambda_\mu)\in SL(n+1)$. In this
section latin indices $i$, $j$, \dots will run from $1$ to $n$ and greek
indices $\lambda$, $\mu$, \dots will run from $1$ to $n+1$. A projective
transformation in the affine chart has the form:
\begin{equation}\label{22}
  \tilde{u}^i = T^i(u^j) = \frac{a^i_ju^j+a^i_{n+1}}{a^{n+1}_ju^j+a^{n+1}_{n+1}}.
\end{equation}

In this section we will calculate the action of a projective transformation on
a second-order homogeneous Hamiltonian operator. We will realize that the
transformation \eqref{22} alone is not enough to yield invariance, while
reciprocal projective transformations guarantee the invariance of the
form~\eqref{1b} of our operators.

\subsection{Projective invariance of the Hamiltonian operators}
\label{sec:proj-invar-hamilt}

We would like to find the change of coordinates formula on the leading
coefficient $g$ of a second-order homogeneous Hamiltonian operator; in other
words, we are looking for a formula connecting\footnote{In skew-symmetric
  tensors indices are summed on all their ranges, according to Einstein's
  convention.}
\begin{equation}
  \label{eq:4}
  g = (T_{ijk}u^k + g^0_{ij})du^i\wedge du^j \quad\text{and}\quad
  \tilde{g} = (\tilde{T}_{ijk}\tilde{u}^k + \tilde{g}^0_{ij})
  d\tilde{u}^i\wedge d\tilde{u}^j.
\end{equation}
Note that we will work with the covariant version of the leading coefficient;
this is possible due to our assumption $\det(g)\neq 0$.  As a preliminary
remark, note that
\begin{equation}\label{2}
  d\left(\tilde{u}^i\right)=d\left(\frac{a^i_su^s+a^i_{n+1}}
    {a^{n+1}_su^s+a^{n+1}_{n+1}}\right)
  = \frac{Aa^i_sdu^s-(a^i_su^s+a^i_{n+1})a^{n+1}_ldu^l}{A^2}
\end{equation}where $A=a^{n+1}_su^s+a^{n+1}_{n+1}$.

\begin{theorem}
  Under the transformation \eqref{22} we obtain
  \begin{subequations}\label{eq:14}
\begin{align}\label{eq:9}
  \begin{split}
  T_{lcs}=&\frac{1}{2A^3}\Big(\tilde{T}_{ijk}(a^i_la^j_c-a^i_ca^j_l)a^k_s
    +\tilde{g}_{ij}^0(a^i_la^j_c - a^i_ca^j_l)a^{n+1}_s
    \\
    &-\tilde{g}^0_{ij}(a^i_la^{n+1}_c-a^i_ca^{n+1}_l)a^j_s
    -\tilde{g}^0_{ij}(a^{n+1}_la^j_c-a^{n+1}_ca^j_l)a^i_s\Big)
    \end{split}
  \\ \label{eq:13}
  \begin{split}
  g^0_{lc}=&\frac{1}{2A^3}\Big(\tilde{T}_{ijk}(a^i_la^j_c-a^i_ca^j_l)a^k_{n+1}
    +\tilde{g}^0_{ij}(a^i_la^j_c-a^i_ca^j_l)a^{n+1}_{n+1}
    \\
    &-\tilde{g}_{ij}^0(a^i_la^{n+1}_c-a^i_ca^{n+1}_l)a^j_{n+1}
    -\tilde{g}^0_{ij}(a^j_ca^{n+1}_l-a^j_la^{n+1}_c)a^i_{n+1}\Big)
    \end{split}
\end{align}
\end{subequations}
\end{theorem}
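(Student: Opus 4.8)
\emph{Proof sketch.}\ The plan is to exploit the fact that the leading coefficient of a second-order homogeneous Hamiltonian operator transforms tensorially under any point change $\tilde u^i=\tilde u^i(u^j)$ of the dependent variables. Writing the transformed operator as $\tilde P^{ij}=\pd{\tilde u^i}{u^a}\,P^{ab}\,\pd{\tilde u^j}{u^b}$ (composition of operators, the Jacobian factors acting by multiplication) and extracting the coefficient of $\partial_x^2$ gives $\tilde g^{ij}=\pd{\tilde u^i}{u^a}\pd{\tilde u^j}{u^b}g^{ab}$; so $g^{ij}$ is a contravariant $2$-tensor and, since $\det(g)\neq0$, the covariant $g_{ij}$ is a covariant $2$-tensor. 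Equivalently the $2$-form $g=g_{ij}\,du^i\wedge du^j$ of \eqref{eq:4} is invariant — the transformation law suggested by the preliminary remark \eqref{2}:
\begin{equation*}
  g_{lc}=\tilde g_{ij}\,\pd{\tilde u^i}{u^l}\,\pd{\tilde u^j}{u^c},
\end{equation*}
an identity that is automatically skew in $l,c$ because $\tilde g_{ij}$ is skew in $i,j$. The whole proof then reduces to substituting into this identity the explicit Jacobian \eqref{2}, $\pd{\tilde u^i}{u^l}=A^{-2}\bigl(Aa^i_l-b^ia^{n+1}_l\bigr)$ with $b^\lambda:=a^\lambda_su^s+a^\lambda_{n+1}$ (so that $A=b^{n+1}$), together with the canonical form $\tilde g_{ij}=\tilde T_{ijk}\tilde u^k+\tilde g^0_{ij}$ in the $\tilde u$-chart, where $\tilde u^k=b^k/A$, and then reading off from the result the coefficient of $u^s$ (which is $T_{lcs}$) and the $u$-independent part (which is $g^0_{lc}$).

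Carrying out the multiplication, the key simplifications are: (a) every term containing two of the linear forms $b^i,b^j,b^k$ simultaneously is a contraction of a symmetric quantity with $\tilde T_{ijk}$ (skew in all three indices) or with $\tilde g^0_{ij}$ (skew in $i,j$), and hence vanishes — this disposes in particular of the $b^ib^j$-piece, i.e. the one carrying $a^{n+1}_la^{n+1}_c$; (b) after these cancellations the denominator $A^5$ that appears naively (from $A^{-4}$ in $\pd{\tilde u^i}{u^l}\pd{\tilde u^j}{u^c}$ and $A^{-1}$ in $\tilde u^k$) collapses to $A^3$, and the surviving numerator is at most linear in the $u^s$. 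One is led to the compact intermediate identity
\begin{multline*}
  2A^3g_{lc}=\tilde T_{ijk}(a^i_la^j_c-a^i_ca^j_l)b^k
  +\tilde g^0_{ij}\bigl[(a^i_la^j_c-a^i_ca^j_l)A\\
  -(a^i_la^{n+1}_c-a^i_ca^{n+1}_l)b^j-(a^{n+1}_la^j_c-a^{n+1}_ca^j_l)b^i\bigr],
\end{multline*}
in which the three $\tilde g^0$-terms are the three surviving contributions, according to whether the surviving factor $a^{n+1}$ comes from the Jacobian of $\tilde u^i$, that of $\tilde u^j$, or the denominator $A$ hidden inside $\tilde u^k$, while the antisymmetrizations $a^i_la^j_c-a^i_ca^j_l$ arise because skew-symmetry of $\tilde T_{ijk}$ (respectively $\tilde g^0_{ij}$) lets one trade $2a^i_la^j_c$ for $a^i_la^j_c-a^i_ca^j_l$. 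Splitting the right-hand side according to the power of the affine coordinates hidden in $b^\lambda=a^\lambda_su^s+a^\lambda_{n+1}$, the part proportional to $u^s$ divided by $2A^3$ is precisely \eqref{eq:9}, and the $u$-independent part divided by $2A^3$ is precisely \eqref{eq:13}.

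I expect the only real obstacle to be bookkeeping: three skew-symmetries — the three indices of $\tilde T_{ijk}$, the two of $\tilde g^0_{ij}$, and the wedge in $l,c$ — have to be tracked at once, and the relative signs and index placements of the three $\tilde g^0$-terms are easy to get wrong; the safe route is to establish the compact identity first and only then expand $b^\lambda$. It is worth noting already at this stage that the factor $A^{-3}$ in \eqref{eq:9}, \eqref{eq:13} is exactly the obstruction to \eqref{22} alone preserving the canonical form \eqref{1b}, \eqref{aa3}: since $A^3g_{lc}$ is again affine-linear in $u$, the extra $x$-rescaling $d\tilde x=A\,dx$ of a projective reciprocal transformation — which reweights the covariant leading coefficient by exactly the right power of $A$ — is what will bring the operator back to the form \eqref{1b}. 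That is the point to be exploited in the next step.
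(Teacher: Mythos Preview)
Your proposal is correct and follows essentially the same route as the paper: both compute the pullback of the covariant $2$-form $\tilde g_{ij}\,d\tilde u^i\wedge d\tilde u^j$ under \eqref{22} using the Jacobian \eqref{2}, exploit the skew-symmetry of $\tilde T_{ijk}$ and $\tilde g^0_{ij}$ to kill the terms quadratic in the $b^\lambda$'s (and hence reduce $A^{-5}$ to $A^{-3}$), and then split the result into its $u^s$-linear and constant parts. Your compact intermediate identity is exactly the paper's displayed formula just before ``Collecting $u^s$'', rewritten with the abbreviation $b^\lambda=a^\lambda_su^s+a^\lambda_{n+1}$ and with the antisymmetrization in $l,c$ (coming from the wedge) made explicit.
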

\begin{proof}
  Applying the transformation to
  $\tilde{g}_{ij}d\tilde{u}^i\wedge d\tilde{u}^j=
  \tilde{T}_{ijk}\tilde{u}^kd\tilde{u}^i\wedge d\tilde{u}^j+
  \tilde{g}_{ij}^0d\tilde{u}^i\wedge d\tilde{u}^j$ we obtain
\begin{align*}
  \tilde{g}_{ij}^0d\tilde{u}^i&\wedge d\tilde{u}^j=
  \\
  &=\tilde{g}_{ij}^0\left(\frac{Aa^i_sdu^s-(a^i_su^s+a^i_{n+1})a^{n+1}_ldu^l}{A^2}
    \right)\wedge \left(\frac{Aa^j_sdu^s-(a^j_su^s+a^j_{n+1})a^{n+1}_ldu^l}{A^2}\right)
  \\
  &=\frac{\tilde{g}^0_{ij}}{A^4}\big(A^2a^i_sa^j_ldu^s\wedge
    du^l-Aa^i_s(a^j_bu^b+a^j_{n+1})a^{n+1}_cdu^s\wedge du^c
  \\
  &\hphantom{ciao} - A(a^i_mu^m+a^i_{n+1})a^{n+1}_la^j_sdu^l\wedge du^s+
  \\
  &\hphantom{ciao}(a^i_mu^m+a^i_{n+1})a^{n+1}_l(a^j_bu^b+a^j_{n+1})a^{n+1}_cdu^l\wedge
    du^c\big)
  \\
  &=\frac{\tilde{g}^0_{ij}}{A^4}\big(A^2a^i_sa^j_ldu^s\wedge
    du^l-Aa^i_s(a^j_bu^b+a^j_{n+1})a^{n+1}_cdu^s\wedge du^c
  \\
  &\hphantom{ciao}-A(a^i_mu^m+a^i_{n+1})a^{n+1}_la^j_sdu^l\wedge du^s\big)
  \\
  &=\frac{\tilde{g}^0_{ij}}{A^3}\big(Aa^i_sa^j_ldu^s\wedge
       du^l-a^i_s(a^j_bu^b+a^j_{n+1})a^{n+1}_cdu^s\wedge du^c
  \\
  &\hphantom{ciao}-(a^i_mu^m+a^i_{n+1})a^{n+1}_la^j_sdu^l\wedge du^s\big)
\end{align*}
Analogously,
\begin{align*}
  \tilde{T}_{ijk}\tilde{u}^kd\tilde{u}^i\wedge d\tilde{u}^j&=
  \tilde{T}_{ijk}\frac{a^i_su^s+a^i_{n+1}}{A^5}A^2a^i_la^j_cdu^l\wedge du^c
  \\
  &=\frac{\tilde{T}_{ijk}}{A^3}(a^k_su^s+a^k_{n+1})a^i_la^j_cdu^l\wedge du^c,
\end{align*}
where three terms cancel due to the skew-symmetry of $\tilde{T}_{ijk}$.  We
finally obtain
\begin{multline}
  \tilde{g}_{ij}d\tilde{u}^i\wedge d\tilde{u}^j=
  \frac{1}{A^3}\Big[\tilde{T}_{ijk}(a^k_su^s+a^k_{n+1})a^i_la^j_c +
\tilde{g}^0_{ij}a^i_l(a^{n+1}_su^s+a^{n+1}_{n+1})a^j_c \\
 -\tilde{g}^0_{ij}a^i_l(a^j_su^s+a^j_{n+1})a^{n+1}_c
 -\tilde{g}^0_{ij}a^j_c(a^i_su^s+a^i_{n+1})a^{n+1}_l\Big]du^l\wedge du^c
\end{multline}
Collecting $u^s$, and comparing the left-hand side with $g_{lc}du^l\wedge du^c$
with respect to a basis (\emph{i.e.}, $l<c$) we obtain the change of
coordinates formula~\eqref{eq:14}
\end{proof}

\begin{corollary}\label{cor:proj-invar-hamilt-1}
  The indexed families $T_{lcs}$ and $g^0_{lc}$ as obtained from
  $\tilde{T}_{ijk}$ and $\tilde{g}^0_{ij}$ by means of the above transformation
  are skew-symmetric with respect to all of their indices. Hence, a projective
  transformation of the leading coefficient of a second-order HHO preserves its
  form up to a conformal factor:
  \begin{equation}
    \label{eq:15}
    \tilde{g}_{ij}d\tilde{u}^i\wedge d\tilde{u}^j=\frac{1}{A^3}
    g_{lc}du^l\wedge du^c.
  \end{equation}
\end{corollary}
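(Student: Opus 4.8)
The plan is to verify the two asserted skew-symmetries by direct inspection of the explicit formulas \eqref{eq:9} and \eqref{eq:13}, using only relabelling of the summation indices $i$, $j$, $k$ and the skew-symmetry of $\tilde T_{ijk}$ (in all three slots) and of $\tilde g^0_{ij}$ (in its two slots); the scalar prefactor $1/(2A^3)$ is irrelevant to skew-symmetry and may be dropped throughout. The identity \eqref{eq:15} then requires no separate computation: it drops out of the last displayed identity in the preceding theorem once the skew-symmetries are known. The case \eqref{eq:13} is immediate, since under the transposition $l\leftrightarrow c$ every one of its four summands changes sign (the bracket $a^i_l a^j_c - a^i_c a^j_l$ and each of its analogues flips), so $g^0_{lc}$ is skew in $l$, $c$ and nothing more is needed there.

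The real work is \eqref{eq:9}. The same sign-flip observation already makes its right-hand side skew under $l\leftrightarrow c$, so it suffices to establish skew-symmetry under one further transposition, say $c\leftrightarrow s$ --- equivalently, total skew-symmetry in $(l,c,s)$. For the $\tilde T$-term I would relabel $i\leftrightarrow j$ to get $\tilde T_{ijk}(a^i_l a^j_c - a^i_c a^j_l)a^k_s = 2\,\tilde T_{ijk}a^i_l a^j_c a^k_s$, and then observe that $\tilde T_{ijk}a^i_l a^j_c a^k_s$ is automatically totally skew in $(l,c,s)$: every transposition of the free indices is realized by the corresponding transposition of the dummies $(i,j,k)$, under which $\tilde T$ is odd. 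For the three $\tilde g^0$-terms I would introduce $H_{pq}:=\tilde g^0_{ij}a^i_p a^j_q$, skew in $p$, $q$ with $p$, $q$ ranging over $1,\dots,n+1$, rewrite each summand as a signed product $H_{\bullet\bullet}\,a^{n+1}_{\bullet}$ and collect; the result should be $2\big(H_{lc}a^{n+1}_s - H_{ls}a^{n+1}_c + H_{cs}a^{n+1}_l\big)$, which is visibly cyclic in $(l,c,s)$ and odd under $l\leftrightarrow c$, hence totally skew. I expect this collecting step for the $\tilde g^0$-terms to be the only part needing care --- it is pure sign bookkeeping, with no genuine obstacle, and it becomes transparent once one recognizes the output as the coefficient array of the $3$-form $\tilde g^0_{ij}\,\alpha^i\wedge\alpha^j\wedge\alpha^{n+1}$ with $\alpha^\mu := a^\mu_p\,du^p$.

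Granted total skew-symmetry of $T_{lcs}$ and skew-symmetry of $g^0_{lc}$, I would close by rereading the final display of the preceding theorem: the bracket there is of the shape $(\hat T_{lcs}u^s + \hat g^0_{lc})\,du^l\wedge du^c$ with constant tensors $\hat T$, $\hat g^0$, and since $du^l\wedge du^c$ antisymmetrizes in $l$, $c$ one may replace $\hat T$, $\hat g^0$ by their $[lc]$-skew parts, which the above shows to be (totally) skew; so the bracket is exactly the leading coefficient of a bona fide second-order HHO, multiplied by the overall conformal factor $A^{-3}$, which is \eqref{eq:15}. The conceptual point behind all of this --- and the one worth emphasizing --- is that a pair $(\tilde T_{ijk},\tilde g^0_{ij})$ with $\tilde T$ totally skew and $\tilde g^0$ skew is precisely a skew $3$-tensor $\tilde\Omega$ on $\mathbb{C}^{n+1}$ (with $\tilde\Omega_{ijk}$ proportional to $\tilde T_{ijk}$ for $i,j,k\le n$ and $\tilde\Omega$ with one index equal to $n+1$ proportional to $\tilde g^0$), that the projective map lifts to a linear automorphism $\mathcal A$ of $\mathbb{C}^{n+1}$, and that formulas \eqref{eq:9}--\eqref{eq:13} are nothing other than the components of the pullback $\mathcal A^{*}\tilde\Omega$, which is again totally skew --- the factor $A^{-3}$ being the homogeneity weight collected on passing between two affine charts of $\mathbb{P}^n$. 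This is the mechanism that Theorem~\ref{th:corresp} will later make systematic.
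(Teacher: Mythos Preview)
Your proof is correct and follows essentially the same route as the paper's: manifest skew-symmetry in $l,c$, then skew-symmetry in $c,s$ shown by treating the $\tilde T$-summand and the block of three $\tilde g^0$-summands separately. Your version is considerably more explicit --- the paper merely asserts that ``the skew-symmetry holds separately in the summand $\tilde{T}_{ijk}(a^i_la^j_c-a^i_ca^j_l)a^k_s$ and in the remaining three summands'' without carrying out the collection, whereas your introduction of $H_{pq}=\tilde g^0_{ij}a^i_pa^j_q$ makes the cyclic structure $H_{lc}a^{n+1}_s - H_{ls}a^{n+1}_c + H_{cs}a^{n+1}_l$ visible and the total skew-symmetry immediate. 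Your closing conceptual remark (that the pair $(\tilde T,\tilde g^0)$ is really a $3$-form on $\mathbb{C}^{n+1}$ and formulas~\eqref{eq:9}--\eqref{eq:13} are its linear pullback) is exactly what the paper develops later in Lemma~\ref{lemma:projrec} and Theorem~\ref{th:corresp}, so it is a welcome anticipation rather than extra baggage.
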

\begin{proof}
  The skew-symmetry of $g^0_{lc}$ is evident, and it is easy to show that
  $T_{lcs}=-T_{lsc}$ by observing that the skew-symmetry holds separately in
  the summand $\tilde{T}_{ijk}(a^i_la^j_c-a^i_ca^j_l)a^k_s$ and in the
  remaining three summands.
\end{proof}

We recall that a reciprocal transformation is a nonlocal change of the
independent variables $t$, $x$ defined as
\begin{equation}\label{rec1}
  d\tilde{t} = B(u)dt,\qquad d\tilde{x}=A(u)dx
\end{equation}
where $A(u)$, $B(u)$ are functions depending on $u$. Projective reciprocal
transformation were introduced in \cite{FPV14} as invariance transformations
for the canonical form of third-order HHOs. They are reciprocal transformations
of the form
\begin{equation}\label{eq:7}
  d\tilde{t} = dt,\qquad d\tilde{x} = A\,dx = (a^{n+1}_ku^k+a^{n+1}_{n+1})dx
\end{equation}
coupled with a projective transformation $T$ as in~\eqref{22}. We are going to
prove that projective reciprocal transformations preserve the canonical
form~\eqref{1b} of second-order HHOs. The proof follows the lines of the
proof of the analogous result for third-order HHOs \cite{FPV14}.

\begin{theorem}\label{t2}
  Projective reciprocal transformations preserve the canonical form \eqref{1b}
  of second-order HHOs.
\end{theorem}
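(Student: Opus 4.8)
The plan is to compute how a second-order HHO $P^{ij} = \partial_x g^{ij} \partial_x$ transforms under the combined action of the projective transformation~\eqref{22} and the reciprocal rescaling $d\tilde x = A\,dx$, and to show that the result is again of the canonical form~\eqref{1b}. The natural object to track is the covariant leading coefficient $g_{ij}$, since Corollary~\ref{cor:proj-invar-hamilt-1} already tells us that under the projective part alone $\tilde g_{ij}\,d\tilde u^i\wedge d\tilde u^j = A^{-3}\,g_{lc}\,du^l\wedge du^c$, i.e.\ the covariant leading coefficient picks up exactly the conformal factor $A^{-3}$ and stays in the form $\tilde T_{ijk}\tilde u^k + \tilde g^0_{ij}$. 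So the projective part is essentially under control; what remains is to verify that the nonlocal rescaling of $x$ exactly compensates for this conformal factor and for the additional terms generated when $\partial_x$ is rewritten in terms of $\partial_{\tilde x}$.

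The key steps, in order. First, I would recall the transformation rule for $\partial_x$ under $d\tilde x = A\,dx$, namely $\partial_x = A\,\partial_{\tilde x}$ (as an operator, noting that $A = A(u)$ depends on $x$ through the $u^k$). Second, I would write out $P = \partial_x g^{ij}\partial_x$ acting on a test covector/density and substitute $\partial_x = A\partial_{\tilde x}$ on both sides, carefully commuting the factors of $A$ past the derivatives; this produces $P = A\,\partial_{\tilde x}\,(A g^{ij})\,\partial_{\tilde x}\,A$ up to the precise placement of $A$'s, which one then needs to reorganize. Third — and this is where the conformal weight must be matched — I would use the fact that the \emph{contravariant} coefficient transforms with a factor $A^{3}$ (the inverse/dual weight to the $A^{-3}$ on the covariant $g_{ij}$ from Corollary~\ref{cor:proj-invar-hamilt-1}), together with the Jacobian of~\eqref{22} from equation~\eqref{2}, so that the explicit $A$-factors cancel and the operator collapses back to $\tilde\partial_{\tilde x}\,\tilde g^{ij}\,\tilde\partial_{\tilde x}$. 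Fourth, I would check that the residual lower-order terms (the $b^{ij}_k$, $c^{ij}_k$, $c^{ij}_{kh}$ pieces) produced by all the commutators are exactly those forced by the canonical form with the new coefficient $\tilde g_{ij} = \tilde T_{ijk}\tilde u^k + \tilde g^0_{ij}$ — equivalently, invoke that the operator is again Hamiltonian (invariance of the Poisson bracket under reciprocal transformations is classical, e.g.\ as used in~\cite{FPV14}) so that by the Potemin–Doyle characterization it automatically has canonical form once its leading term does.

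The main obstacle I expect is bookkeeping rather than conceptual: tracking the placement of the $x$-dependent factor $A(u)$ as it is commuted through $\partial_x$ twice, and making sure the weight $A^{-3}$ coming from Corollary~\ref{cor:proj-invar-hamilt-1} combines correctly with the two Jacobian factors from $d\tilde u^i = (\text{stuff})/A^2$ and with $\partial_x = A\partial_{\tilde x}$. The cleanest way to avoid errors is probably to phrase the whole computation at the level of the Poisson bracket~\eqref{eq:33} — transform $F$, $G$, the measure $dx$, and the variational derivatives all at once — rather than manipulating the differential operator directly; then the needed cancellation of $A$-powers is a single counting argument, and the fact that the transformed bracket is again given by a local second-order operator of type~\eqref{eq:27}-order-two with leading coefficient~\eqref{eq:15} forces the canonical form~\eqref{1b} by the cited characterization. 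I would follow the structure of the corresponding third-order proof in~\cite{FPV14}, adapting the weights from $3$ to the appropriate powers for the second-order case.
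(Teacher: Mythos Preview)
Your proposal is correct and matches the paper's approach: work at the level of the Poisson bracket~\eqref{eq:33}, substitute $\partial_x = A\,\partial_{\tilde x}$, transform the densities and variational derivatives, and invoke Corollary~\ref{cor:proj-invar-hamilt-1} for the form of the transformed leading coefficient. The paper streamlines the calculation in two ways you should adopt: it first reduces to the special transformation $\tilde u^i = u^i/A$ (the full projective map being the composition of this with a linear change of the $u$'s, which trivially preserves~\eqref{1b}), and its direct bracket computation produces $\tilde P^{kl} = \partial_{\tilde x}\,\tilde g^{kl}\,\partial_{\tilde x}$ exactly, so your fallback to the Potemin--Doyle characterization for the lower-order terms is unnecessary.
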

\begin{proof}
  It is enough to prove the result for a transformation of the
  type $\tilde{u}^i = u^i/A$, where $A=a^{n+1}_ku^k+a^{n+1}_{n+1}$. It is easy
  to see that $\int{u^idx}$ transform as
  $\int{\frac{u^i}{A}d\tilde{x}}=\int{\tilde{u}^id\tilde{x}}$, and, more
  generally, two densities $F=\int{f(u)dx}$ and $H=\int{h(u)dx}$ transform as
  $f=A\tilde{f}$ and $h=A\tilde{h}$.  Moreover, we have:
  \begin{equation}
    f_j=\frac{\partial f}{\partial u^j}=\frac{\partial A}{\partial u^j}
    \tilde{f}+A\frac{\partial \tilde{f}}{\partial u^j}
    =a^{n+1}_j\tilde{f}+A\tilde{f}_j
  \end{equation}
  and analogously $h_j=a^{n+1}_j\tilde{h}+A\tilde{h}_j$, then:
  \begin{multline}\label{eq:29}
    \{F,H\}=\displaystyle \int{f_iP^{ij}h_jdx}=\\
            =\displaystyle\int{(a^{n+1}_i\tilde{f}+A\tilde{f}_i)A
              \partial_{\tilde{x}}\left(g^{ij}A\partial_{\tilde{x}}
              (a^{n+1}_j\tilde{h}+A\tilde{h}_j)\right)\frac{1}{A}d\tilde{x}},
  \end{multline}
  where we used $\partial_x=A\partial_{\tilde{x}}$.  We can cancel $A$ once and
  obtain a new second order HHO with leading term $A^3g^{ij}$.  Let us first
  observe that
  $
  a^{n+1}_{n+1}+a^{n+1}_ku^k=A=a^{n+1}_{n+1}\frac{1}{1-a^{n+1}_l\tilde{u}^l}$.
  Then, we have
\begin{gather}
  \frac{\partial \tilde{u}^i}{\partial u^j}=
  \frac{\delta^i_jA-a^{n+1}_ju^i}{A^2}=
  \frac{\delta^i_j-a^{n+1}_i\tilde{u}^j}{A}
  \\
  A\frac{\partial \tilde{f}}{\partial u^j}=
  A\frac{\partial \tilde{u}^k}{\partial u^j}
  \frac{\partial \tilde{f}}{\partial \tilde{u}^k}=
  (\delta^k_j-a^{n+1}_j\tilde{u}^k)
  \frac{\partial \tilde{f}}{\partial \tilde{u}^k}
\end{gather}
Now, let us consider again the bracket in~\eqref{eq:29} and carry out the
coordinate change:
\begin{align*}
\{F,H\}=&\int{\left(A\frac{\partial \tilde{f}}{\partial u^i}+a^{n+1}_i\tilde{f}\right)\partial_{\tilde{x}}\left(g^{ij}A\partial_{\tilde{x}}\left(A\frac{\partial \tilde{h}}{\partial u^j}+a^{n+1}_j\tilde{h}\right)\right)d\tilde{x}}\\
=&\int{\left((\delta^k_i-a^{n+1}_i\tilde{u}^k)\frac{\partial \tilde{f}}{\partial \tilde{u}^k}+a^{n+1}_i\tilde{f}\right)\partial_{\tilde{x}}\left(g^{ij}A\partial_{\tilde{x}}\left((\delta^l_j-a^{n+1}_j\tilde{u}^l)\frac{\partial \tilde{h}}{\partial \tilde{u}^l}+a^{n+1}_j\tilde{h}\right)\right)d\tilde{x}}
\end{align*}
Using the identity:
\begin{displaymath}
\partial_{\tilde{x}}\left((\delta^l_j-a^{n+1}_j\tilde{u}^l)\frac{\partial \tilde{h}}{\partial \tilde{u}^l}+a^{n+1}_j\tilde{h}\right)=(\delta^l_j-a^{n+1}_j\tilde{u}^l)\partial_{\tilde{x}}\frac{\partial \tilde{h}}{\partial \tilde{u}^l}
\end{displaymath}
we obtain
\begin{align*}
\{F,H\}=&\int{\left((\delta^k_i-a^{n+1}_i\tilde{u}^k)\frac{\partial
          \tilde{f}}{\partial
          \tilde{u}^k}+a^{n+1}_i\tilde{f}\right)\partial_{\tilde{x}}
          \left(g^{ij}A(\delta^l_j-a^{n+1}_j\tilde{u}^l)\partial_{\tilde{x}}
          \frac{\partial
          \tilde{h}}{\partial \tilde{u}^l}\right)d\tilde{x}}
  \\
  =&\int{\left((\delta^k_i-a^{n+1}_i\tilde{u}^k)
     \frac{\partial \tilde{f}}{\partial \tilde{u}^k}\right)
     \partial_{\tilde{x}}\left(g^{ij}A(\delta^l_j-a^{n+1}_j\tilde{u}^l)
     \partial_{\tilde{x}}\frac{\partial \tilde{h}}{\partial \tilde{u}^l}\right)
     d\tilde{x}}
  \\
        &\hphantom{ciao}-\int{a^{n+1}_i\partial_{\tilde{x}}\tilde{f}\cdot
          \left(g^{ij}A(\delta^l_j-a^{n+1}_j\tilde{u}^l)
          \partial_{\tilde{x}}\frac{\partial \tilde{h}}{\partial \tilde{u}^l}
          \right)d\tilde{x}}.
\end{align*}
Finally, observing that
$\partial_{\tilde{x}}\tilde{f}=\tilde{f}_{,m}\tilde{u}^m_{\tilde{x}}$ and by
using the identity
\begin{displaymath}
(\delta^k_i-a^{n+1}_i\tilde{u}^k)\frac{\partial \tilde{f}}{\partial
  \tilde{u}^k}\partial_{\tilde{x}}-a^{n+1}_i
\frac{\partial \tilde{f}}{\partial \tilde{u}^k}\tilde{u}_{\tilde{x}}^k
=\frac{\partial \tilde{f}}{\partial \tilde{u}^k}
\partial_{\tilde{x}}(\delta^k_i-a^{n+1}_i\tilde{u}^k)
\end{displaymath}
we have
\begin{equation}\label{eq:30}
  \{F,H\}=\int{\frac{\partial \tilde{f}}{\partial \tilde{u}^k}
    \tilde{P}^{kl}\frac{\partial \tilde{h}}{\partial \tilde{u}^l}d\tilde{x}}
\end{equation}
with
\begin{equation}
  \tilde{P}^{kl}=\partial_{\tilde{x}}(\delta^k_i-a^{n+1}_i\tilde{u}^k)
  g^{ij}A(\delta^l_j-a^{n+1}_j\tilde{u}^l)\partial_{\tilde{x}} =
  \partial_{\tilde{x}}\tilde{g}^{ij}\partial_{\tilde{x}}.
\end{equation}
where $\tilde{P}$ is again a local homogeneous operator of second order in view
of Corollary~\ref{cor:proj-invar-hamilt-1}.
\end{proof}

\subsection{Projective interpretation of the Hamiltonian operators}
\label{sec:proj-class-hamilt}

The action of the projective group on second-order HHOs allows us to classify
such operators. Indeed, we exhibit a bijective correspondence of the leading
term of the operator (in dimension $n$) with a projective $3$-form (in
dimension $n+1$). Such geometric objects are well-known in algebraic
geometry~\cite{DePoiFaenziMezzettiRanestad17} and there exist a classification
in dimensions up to $n+1=9$. Of course, we are interested in the even cases
$n=2$, $4$, $6$, $8$ due to the assumption $\det(g)\neq 0$.

Let us set
\begin{equation}\label{eq:8}
T_{n+1\,jk}=-T_{j\,n+1\,k}=T_{jk\,n+1}=g^0_{jk}.
\end{equation}
Then, we have a skewsymmetric indexed family $T_{\lambda\mu\nu}$ with (greek)
indices running from $1$ to $n+1$, extending $T_{ijk}$ (recall that latin
indices run from $1$ to $n$). We have the following statement.

\begin{lemma}\label{lemma:projrec}
  A projective reciprocal transformation induces the transformation
  \begin{equation}
    \label{eq:623}
    T_{\lambda\mu\nu} = \frac{1}{A^3}\tilde{T}_{\alpha\beta\gamma}
    a^\alpha_\lambda a^\beta_\mu a^\gamma_\nu.
  \end{equation}
  Thus, $T_{\lambda\mu\nu}$ transforms as a tensor in
  $\mathbb{C}^{n+1}$ up to a conformal factor.
\end{lemma}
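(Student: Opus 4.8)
The plan is to deduce the lemma from the change-of-coordinates formulas~\eqref{eq:14} for $T_{ijk}$ and $g^0_{ij}$. By Theorem~\ref{t2} and Corollary~\ref{cor:proj-invar-hamilt-1}, the reciprocal factor $d\tilde x=A\,dx$ is exactly what keeps the transformed operator in the canonical form~\eqref{1b}, so under a projective reciprocal transformation the transformed leading coefficient is again of the shape $\tilde T_{ijk}\tilde u^k+\tilde g^0_{ij}$, with $\tilde T_{ijk}$, $\tilde g^0_{ij}$ totally skew-symmetric, and the two triples of constants are related by~\eqref{eq:14}. It then remains to verify that, once $T_{ijk}$ and $g^0_{ij}$ are packaged into the single totally skew-symmetric array $T_{\lambda\mu\nu}$ through~\eqref{eq:8} (with all components in which two or more indices equal $n+1$ being zero), these formulas coalesce into the single tensorial rule~\eqref{eq:623}. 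Concretely, I would fix how many of the free indices $\lambda,\mu,\nu$ equal $n+1$, expand the right-hand side of~\eqref{eq:623}, and split the sum over $\alpha,\beta,\gamma\in\{1,\dots,n+1\}$ according to how many of $\alpha,\beta,\gamma$ equal $n+1$, substituting in each block the values $\tilde T_{n+1\,jk}=\tilde g^0_{jk}$ (and the permutations thereof) with the signs prescribed by~\eqref{eq:8}.

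There are three cases. If $\lambda,\mu,\nu\in\{1,\dots,n\}$, say $(\lambda,\mu,\nu)=(l,c,s)$, then the block of the sum with no Greek index equal to $n+1$ reproduces, after one use of the skew-symmetry of $\tilde T_{ijk}$, the $\tilde T$-summand of~\eqref{eq:9}; the three blocks with exactly one Greek index equal to $n+1$ reproduce, after relabelling dummy indices and using the skew-symmetry of $\tilde g^0_{ij}$, the three $\tilde g^0$-summands of~\eqref{eq:9}; and the blocks with two or three Greek indices equal to $n+1$ vanish, because in them two of the three skew slots of $\tilde T_{\alpha\beta\gamma}$ are contracted against a product of two equal rows of $(a^\lambda_\mu)$, which is symmetric. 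If exactly one of $\lambda,\mu,\nu$ equals $n+1$, say $(\lambda,\mu,\nu)=(l,c,n+1)$, then the left-hand side of~\eqref{eq:623} equals $T_{lc\,n+1}=g^0_{lc}$ by~\eqref{eq:8}, and the same block decomposition of the right-hand side, with the signs from~\eqref{eq:8}, reproduces the four summands of~\eqref{eq:13}; the slightly asymmetric four-term shape of~\eqref{eq:13} is precisely what one obtains after symmetrising the one-$(n+1)$ blocks over the three index positions. Finally, if two or three of $\lambda,\mu,\nu$ equal $n+1$, the left-hand side vanishes by skew-symmetry of $T_{\lambda\mu\nu}$ and the right-hand side vanishes by the argument just used, so the identity reads $0=0$. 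Since both sides of~\eqref{eq:623} are skew-symmetric in $\lambda,\mu,\nu$ --- the left-hand side by construction, the right-hand side because $\tilde T_{\alpha\beta\gamma}$ is --- these cases exhaust all index choices, and the final clause of the lemma is immediate: \eqref{eq:623} is verbatim the transformation rule of a covariant $3$-tensor on $\mathbb{C}^{n+1}$ under the linear substitution $(a^\lambda_\mu)$, up to the scalar factor $A^{-3}$.

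The only genuine work is the index bookkeeping in the middle case --- matching the symmetrised one-$(n+1)$ blocks with the exact grouping displayed in~\eqref{eq:13} --- where a stray sign from~\eqref{eq:8} or a mislabelled dummy index is the likeliest source of error. As an independent check I would also verify the statement invariantly: the array $T_{\lambda\mu\nu}$ is the coefficient array of a constant $3$-form $\omega$ on $\mathbb{C}^{n+1}$; the covariant leading term $g=g_{ij}\,du^i\wedge du^j$ is, up to a numerical factor, the restriction to the affine chart $\{v^{n+1}=1\}$ of the contraction $\iota_E\omega$ with the Euler vector field $E=v^\lambda\,\partial_{v^\lambda}$; pullback of $\omega$ by a linear map commutes with $\iota_E$, and passing to the new affine trivialisation multiplies $\iota_E\omega$ --- which is homogeneous of degree $3$ under the dilations of $\mathbb{C}^{n+1}$ --- by $A^{-3}$, exactly the conformal factor. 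This route gives~\eqref{eq:623} with no index chasing, and it makes transparent the bijection with $3$-forms exploited in the next subsection.
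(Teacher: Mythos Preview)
Your proposal is correct and follows essentially the same approach as the paper: both start from the change-of-coordinates formulas~\eqref{eq:14}, substitute $\tilde g^0_{ij}=\tilde T_{ij\,n+1}$ via~\eqref{eq:8}, and recombine the pieces into a single sum over Greek indices; the paper only writes out the case $(\lambda,\mu,\nu)=(l,c,s)$ and dismisses the remaining case with ``a similar proof holds for $T_{lc\,n+1}=g^0_{lc}$'', whereas you treat all three cases and add the trivial two-$(n+1)$ case explicitly. Your invariant check via $\iota_E\omega$ is not in the paper's proof of the lemma but is precisely the mechanism behind the subsequent Theorem~\ref{th:corresp}, so it is a legitimate and illuminating shortcut.
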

\begin{proof}
  It follows from
  \begin{align*}
    T_{lcs}= & \frac{1}{2A^3}\Big(\tilde{T}_{ijk}(a^i_la^j_c-a^i_ca^j_l)a^k_s
    +\tilde{T}_{ij\,n+1}(a^i_la^j_c - a^i_ca^j_l)a^{n+1}_s
    \\
    & -\tilde{T}_{ij\,n+1}(a^i_la^{n+1}_c-a^i_ca^{n+1}_l)a^j_s
      -\tilde{T}_{ij\,n+1}(a^{n+1}_la^j_c-a^{n+1}_ca^j_l)a^i_s\Big)
    \\
    = & \frac{1}{2A^3}\Big(\tilde{T}_{ij\nu}(a^i_la^j_c-a^i_ca^j_l)a^\nu_s
        +\tilde{T}_{i\,n+1\,k}(a^i_la^{n+1}_c-a^i_ca^{n+1}_l)a^k_s
    \\
    & +\tilde{T}_{n+1\,jk}(a^{n+1}_la^j_c-a^{n+1}_ca^j_l)a^k_s\Big)
    \\
    = & \frac{1}{2A^3}\tilde{T}_{\lambda\mu\nu}
        (a^\lambda_la^\mu_c-a^\lambda_ca^\mu_l)a^\nu_s
    \\
    = & \frac{1}{A^3}\tilde{T}_{\lambda\mu\nu}a^\lambda_la^\mu_ca^\nu_s.
  \end{align*}
  A similar proof holds for $T_{lc\,n+1}=g^{0}_{lc}$.
\end{proof}

\medskip

In what follows we will identify three-forms
$\omega\in\wedge^3(\mathbb{C}^{n+1})^*$ on a vector space $\mathbb{C}^{n+1}$
with maps of the form (see also \cite{DePoiFaenziMezzettiRanestad17} for more
details)
\begin{equation}
  \label{eq:20}
  i(\omega)\colon \mathbb{C}^{n+1} \to \wedge^2 (\mathbb{C}^{n+1})^*,
  v\mapsto \frac{1}{3}i_v(\omega).
\end{equation}
Clearly, the map $\omega\mapsto i(\omega)$ is an isomorphism onto its image.
If $(v^i)$ are coordinates on $\mathbb{C}^{n+1}$, then $(dv^i)$ is a basis of
$(\mathbb{C}^{n+1})^*$ and the above isomorphism reads as
\begin{equation}
  \label{eq:21}
  \omega_{ijk}dv^i\wedge dv^j\wedge dv^k \mapsto \omega_{ijk}v^k dv^i\wedge dv^j.
\end{equation}

\begin{theorem}\label{th:corresp}
  There is a bijective correspondence between leading coefficients of second
  order HHOs $g=(T_{ijk}u^k + g^0_{ij})du^i\wedge du^j$ as in \eqref{aa3}, and
  three-forms
  $\omega=\omega_{\lambda\mu\nu}dv^{\lambda}\wedge dv^{\mu}\wedge
  dv^{\nu}$. The bijective correspondence is preserved by projective reciprocal
  transformations up to a conformal factor.
\end{theorem}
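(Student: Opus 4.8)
The plan is to realize the correspondence as an explicit \emph{homogenization / dehomogenization} pair of maps and then to check that it intertwines the two group actions; essentially all of the analytic work has already been carried out in the first theorem of this section, in Corollary~\ref{cor:proj-invar-hamilt-1}, and in Lemma~\ref{lemma:projrec}.

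First I would define the map $\Psi$ from leading coefficients to $3$-forms by the prescription \eqref{eq:8}: given $g=(T_{ijk}u^k+g^0_{ij})\,du^i\wedge du^j$ as in \eqref{aa3}, set $\omega_{ijk}:=T_{ijk}$ for Latin indices, $\omega_{ij\,n+1}:=g^0_{ij}$, declare every component carrying two or more indices equal to $n+1$ to vanish, and extend by total antisymmetry. One has to check that this is consistent, i.e.\ that it really yields an element $\omega\in\wedge^3(\mathbb{C}^{n+1})^*$; this is immediate, since $T_{ijk}$ is already totally skew and $g^0_{ij}$ is skew in $i,j$, so the families of prescribed components sit over disjoint orbits of $S_3$ and no inconsistency can arise. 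In the reverse direction I would define $\Phi$ by restriction: send $\omega$ to the $2$-form on $\mathbb{C}^n$ obtained by pulling the $2$-form-valued map $i(\omega)$ of \eqref{eq:20}--\eqref{eq:21} back along the affine section $u\mapsto[u^1:\dots:u^n:1]$. On that chart $v^{n+1}\equiv 1$, hence $dv^{n+1}=0$, and $\omega_{\lambda\mu\nu}v^\nu\,dv^\lambda\wedge dv^\mu$ collapses to $(\omega_{ijk}u^k+\omega_{ij\,n+1})\,du^i\wedge du^j$, which is exactly a leading coefficient of the form \eqref{aa3}. That $\Phi$ and $\Psi$ are mutually inverse is then visible on components; as a consistency check, the dimensions match by Pascal's identity $\binom{n+1}{3}=\binom{n}{3}+\binom{n}{2}$.

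Second, I would show that the bijection carries the action of projective reciprocal transformations on leading coefficients — the transformation law \eqref{eq:14}, together with Corollary~\ref{cor:proj-invar-hamilt-1} and Theorem~\ref{t2} — to the tautological $SL(n+1)$-action $\omega\mapsto a^{*}\omega$ on $3$-forms, up to a conformal factor. This is precisely a repackaging of Lemma~\ref{lemma:projrec}: the combined tensor $T_{\lambda\mu\nu}$ of \eqref{eq:8} \emph{is} $\Psi(g)$, and \eqref{eq:623} states $T_{\lambda\mu\nu}=\frac{1}{A^3}\tilde T_{\alpha\beta\gamma}a^\alpha_\lambda a^\beta_\mu a^\gamma_\nu$, i.e.\ $\Psi(g)=\frac{1}{A^3}\,a^{*}\Psi(\tilde g)$. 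The scalar $A^{-3}$ here is exactly the conformal factor that already appears in \eqref{eq:15} for $g$; it is the non-tensorial remnant of the projective (as opposed to linear) nature of the transformation, and it is the reason the correspondence is preserved only ``up to a conformal factor''. Combining the two steps yields the statement.

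The proof is therefore mostly bookkeeping built on the earlier results, and the only point that requires genuine care is the interplay of the two conformal factors: one must be sure that the scalar $A^{-3}$ emerging from Lemma~\ref{lemma:projrec} for the $(n+1)$-dimensional tensor is literally the same as the one in \eqref{eq:15} for the $n$-dimensional $2$-form, so that the square with corners $g$, $\tilde g$, $\Psi(g)$, $\Psi(\tilde g)$ closes up with a single overall factor rather than an ambiguous one. Once one observes that $\Phi$ is nothing but pullback of $i(\omega)$ by the linear map $a$ followed by an affine section, this compatibility is automatic, and the theorem follows.
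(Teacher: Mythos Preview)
Your proposal is correct and follows essentially the same route as the paper: both define the map from $3$-forms to leading coefficients by restricting $i(\omega)$ to the affine chart $v^{n+1}=1$, $dv^{n+1}=0$, invert it via the extension \eqref{eq:8}, and then invoke Lemma~\ref{lemma:projrec} for the equivariance under projective reciprocal transformations. Your packaging in terms of the mutually inverse maps $\Phi$ and $\Psi$ is somewhat more streamlined than the paper's explicit index expansion of $i(\omega)$, but the content is identical.
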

\begin{proof}
  Let us consider a three-form
  $\omega=\omega_{\lambda\mu\nu}dv^{\lambda}\wedge dv^{\mu}\wedge
  dv^{\nu}$. Using the isomorphism~\eqref{eq:21} we can rewrite the form as
  \begin{align*}
    i(\omega) =& \omega_{\lambda\mu\nu}
    v^\nu dv^{\lambda}\wedge dv^{\mu}
    \\
    =& \omega_{i\mu\nu} v^\nu dv^{i}\wedge dv^{\mu} +
       \omega_{n+1\,\mu\nu} v^\nu dv^{n+1}\wedge dv^{\mu}
       \\
    & + \omega_{\lambda i\nu} v^\nu dv^\lambda\wedge dv^i +
      \omega_{\lambda\, n+1\nu} v^\nu dv^\lambda\wedge dv^{n+1}
    \\
    & + \omega_{\lambda\mu\,i} v^i dv^\lambda\wedge dv^\mu +
      \omega_{\lambda\mu\, n+1} v^{n+1} dv^\lambda\wedge dv^\mu
    \\
    =& \omega_{ij\nu} v^\nu dv^{i}\wedge dv^j +
       \omega_{i\,n+1\,j} v^j dv^{i}\wedge dv^{n+1} +
       \omega_{n+1\,ij} v^j dv^{n+1}\wedge dv^i
    \\
    & + \omega_{i j\nu} v^\nu dv^i\wedge dv^j
      + \omega_{n+1\, ij} v^j dv^{n+1}\wedge dv^i
      + \omega_{i\, n+1\,j} v^j dv^i\wedge dv^{n+1}
    \\
    & + \omega_{j\mu i} v^i dv^j\wedge dv^\mu +
      \omega_{n+1\,ij} v^j dv^{n+1}\wedge dv^i +
      \omega_{ij\, n+1} v^{n+1} dv^i\wedge dv^j
    \\
    = & 3\omega_{ijk}v^kdv^i\wedge dv^j + 3\omega_{ij\,n+1}v^{n+1}dv^i\wedge dv^j
        + 6 \omega_{ij\,n+1} v^idv^j\wedge dv^{n+1}.
  \end{align*}
  Using the affine chart restriction $v^{n+1}=1$, $dv^{n+1}=0$ we obtain a
  second-order HHO by setting
  \begin{equation}
    \label{eq:17}
    T_{ijk} = 3\omega_{ijk}\quad\text{and} \quad g_{ij}^0 = T_{ij\,n+1} =
    3\omega_{ij\,n+1}.
  \end{equation}

  On the other hand, from a second-order HHO $g$ as in the statement one can
  define the form in homogeneous coordinates
  \begin{equation}
    \label{eq:19}
    G= (T_{ijk}v^k + g^0_{ij}v^{n+1})dv^i\wedge dv^j.
  \end{equation}
  Reversing the steps of the first part of the proof we get the desired
  three-form $\omega$.

  The fact that the correspondence is preserved by projective reciprocal
  transformation up to the conformal factor $1/A^3$ follows from
  Lemma~\ref{lemma:projrec}.
\end{proof}

There is an immediate and important consequence of the above Theorem.

\begin{corollary}
  There is a bijective correspondence between homogeneous second order
  Hamiltonian operators in dimension $n$ and three-forms in dimension $n+1$. 
  The bijective correspondence is preserved by projective reciprocal
  transformations.
\end{corollary}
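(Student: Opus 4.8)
The plan is to derive the Corollary as a direct consequence of Theorem~\ref{th:corresp}, so the work is essentially one of bookkeeping: upgrading the statement about \emph{leading coefficients} to a statement about \emph{operators}. First I would recall that, by equations~\eqref{1b} and~\eqref{aa3} together with the results of \cite{potemin86:_poiss,doyle93:_differ_poiss}, a second-order HHO is, in flat coordinates, completely and freely determined by its leading coefficient $g^{ij}$, equivalently by the covariant form $g_{ij}=T_{ijk}u^k+g^0_{ij}$ with $T_{ijk}$ constant totally skew and $g^0_{ij}$ constant skew. Thus the assignment ``operator $\mapsto$ leading coefficient'' is already a bijection from second-order HHOs (in flat coordinates) onto covariant leading coefficients of the form~\eqref{aa3}. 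Composing this bijection with the one of Theorem~\ref{th:corresp} between such leading coefficients and three-forms $\omega\in\wedge^3(\mathbb{C}^{n+1})^*$ yields the desired bijection between second-order HHOs in dimension $n$ and three-forms in dimension $n+1$.

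Next I would address the equivariance claim. Theorem~\ref{t2} shows that a projective reciprocal transformation sends a second-order HHO in canonical form to a second-order HHO in canonical form, and Corollary~\ref{cor:proj-invar-hamilt-1} (with Lemma~\ref{lemma:projrec}) identifies the induced action on the leading coefficient as the conformal-tensorial action $T_{\lambda\mu\nu}\mapsto \tfrac{1}{A^3}\tilde T_{\alpha\beta\gamma}a^\alpha_\lambda a^\beta_\mu a^\gamma_\nu$, i.e.\ the linear $SL(n+1)$ action on $\wedge^3(\mathbb{C}^{n+1})^*$ up to the factor $1/A^3$. I would then simply observe that in the Corollary — as opposed to Theorem~\ref{th:corresp} — we may absorb the conformal factor: the operator $\tilde P^{kl}=\partial_{\tilde x}\tilde g^{ij}\partial_{\tilde x}$ produced by the proof of Theorem~\ref{t2} is the genuine transformed Hamiltonian operator, with no leftover scalar, so at the level of operators the correspondence is preserved \emph{on the nose}. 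Concretely, the projective transformation acting on $SL(n+1)$ and the induced linear action on three-forms intertwine the two bijections, which is exactly the commuting-squares statement.

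The one point that deserves a careful word, and which I expect to be the only real obstacle, is the matching of normalizations: the ``canonical form'' $P^{ij}=\partial_x g^{ij}\partial_x$ is only defined relative to a choice of flat coordinates, and the bijection with three-forms is therefore really a bijection between three-forms and \emph{pairs} (operator, flat coordinate system) modulo affine changes of flat coordinates. I would handle this by noting that flat coordinates for a flat symmetric connection are unique up to affine transformations, and that affine changes act on $(T_{ijk},g^0_{ij})$ — hence on $\omega$ via~\eqref{eq:8} and~\eqref{eq:17} — precisely through the subgroup of $SL(n+1)$ fixing the hyperplane $v^{n+1}=0$; quotienting both sides by this action makes the correspondence well-defined and still equivariant for the full projective reciprocal group, since that group is generated by $SL(n+1)$ together with the rescaling $\tilde u^i=u^i/A$ already treated in Theorem~\ref{t2}. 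Once this is said, the conformal factor $1/A^3$ appearing in Lemma~\ref{lemma:projrec} is absorbed exactly as in the proof of Theorem~\ref{t2} — cancelling one power of $A$ from the measure and the remaining $A^3$ against the rescaled metric — so the bijection between operators and three-forms is strictly, not merely conformally, preserved.
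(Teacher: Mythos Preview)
Your proposal is correct and follows the same approach the paper intends: the Corollary is stated there as an ``immediate and important consequence'' of Theorem~\ref{th:corresp} with no proof given, and your argument---composing the obvious bijection ``operator $\leftrightarrow$ leading coefficient'' (from~\eqref{1b},~\eqref{aa3}) with Theorem~\ref{th:corresp}, and invoking Theorem~\ref{t2} and Lemma~\ref{lemma:projrec} for equivariance---is exactly the content implicit in that remark. Your additional care about absorbing the conformal factor $1/A^3$ and about the affine ambiguity of flat coordinates goes beyond what the paper spells out, but it is sound and does not deviate from the intended logic.
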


At this point we observe two important facts:
\begin{itemize}
\item from a \emph{geometric} viewpoint, second-order HHOs yield algebraic
  varieties using the corresponding three-forms and the mechanism explained in
  \cite{DePoiFaenziMezzettiRanestad17}.
\item from an \emph{algebraic} viewpoint, second-order HHOs can be classified
  under the action of the projective reciprocal transformations
  by means of the classification of three-forms under the action of
  $SL(n+1,\mathbb{C})$.
\end{itemize}

Let us first summarize the main features of the geometric properties of
second-order HHOs. Our main source is~\cite{DePoiFaenziMezzettiRanestad17}.
Let $\omega$ be a three-form as above.  A line $L$ in $\mathbb{C}^{n+1}$ can be
identified with the skew-symmetric tensor
$L=p^{\lambda\mu}\pd{}{v^\lambda}\wedge\pd{}{v^\mu}$; $(p^{\lambda\mu})$ are
the Pl\"ucker coordinates. The system
\begin{equation}
  \label{eq:16}
  i_L\omega = 0 \Leftrightarrow \omega_{\lambda\mu\nu}p^{\mu\nu} = 0
\end{equation}
is a system of $n+1$ linear equations whose solutions constitute a linear
subspace $\Lambda_{\omega}\subset\mathbb{P}(\wedge^2\mathbb{C}^{n+1})$. If
$\omega$ is a generic $3$-form, then the intersection of $\Lambda_{\omega}$
with the Grassmannian $\mathbb{G}$, $X_{\omega}=\mathbb{G}\cap \Lambda_\omega$
is an $n-1$-dimensional variety, \emph{i.e.} it is a \emph{linear line
  congruence}.

As a direct consequence of Theorem~\ref{th:corresp}, the problem of classifying
non-degenerate $n$-components second-order HHOs under the action of projective
reciprocal transformations is solved by means of the the classification of
$3$-forms in $\mathbb{C}^{n+1}$ under the action of the group
$SL(n+1,\mathbb{C})$. This is what will be exposed in next Section.

\subsection{Projective classification of Hamiltonian operators}
\label{sec:proj-class-hamilt-1}

The following results are a direct consequence of the classification of
$3$-forms in $\mathbb{C}^{n+1}$ under the action of the group
$SL(n+1,\mathbb{C})$. Such a classification can be found in the book
\cite{gurevich64:_invariants} for $n\leq 7$, while the case $n=8$ is covered in
\cite{vinberg88:_class_trivec_dimen_space}. It should be remarked that the
latter paper presents the classification of trivectors in dimension $9$,
\emph{i.e.} elements of $\wedge^3\mathbb{C}^{9}$, under the natural action of
$SL(n+1,\mathbb{C})$. It is easy to realize that the classification of
$3$-forms (\emph{i.e.} the set of orbits) is put in bijective correspondence
with the classification of trivectors by any isomorphism, for example, the
correspondence defined by the passage from a basis to its dual
$e_i\mapsto e^i$.

\paragraph{The case $n=2$.}  There is only one (nontrivial) $3$-form, namely
$\omega=dv^1\wedge dv^2\wedge dv^3$. We can rewrite it as
\begin{equation}
  \label{eq:25}
  i(\omega) = \frac{1}{3}(v^1dv^2\wedge dv^3 - v^2dv^3\wedge dv^1
  + v^3dv^1\wedge dv^2)
\end{equation}
The affine projection $v^3=1$, $dv^3=0$, yields, up to a factor, the leading
coefficient $du^1\wedge du^2$ of the second-order HHO
\begin{equation}
  \label{eq:18}
  R = \begin{pmatrix}0 & 1 \\ -1 & 0
  \end{pmatrix}\partial_x^2.
\end{equation}

\paragraph{The case $n=4$.}  There are two (nontrivial) orbits. The open orbit
is generated by
\begin{equation}
  \label{eq:22}
  \omega = dv^5\wedge(dv^1\wedge dv^2 + dv^3\wedge dv^4),
\end{equation}
that corresponds to the leading coefficient $du^1\wedge du^2 + du^3\wedge du^4$
of the second-order HHO
\begin{equation}
  \label{eq:26}
  R = \begin{pmatrix}0 & 1 & 0 & 0\\ -1 & 0 & 0 & 0\\
    0 & 0 & 0 & 1\\ 0 & 0 & -1 & 0
\end{pmatrix}\partial_x^2.
\end{equation}
the closed orbit is totally decomposable and generated by
\begin{equation}
  \label{eq:23}
  \omega = dv^1\wedge dv^2 \wedge dv^3;
\end{equation}
the corresponding leading coefficient is degenerate: $\det(g_{ij})=0$.

\paragraph{The case $n=6$.} The classification in this case is due to Schouten
(see \cite{gurevich64:_invariants}). There are nine nontrivial orbits. We list
below the generators of the orbits which lead to a non-degenerate $2$-form
$i(\omega)$.
  \begin{enumerate}
  \item The open orbit is generated by
    \begin{multline}
      \label{eq:24}
      \omega = dv^1\wedge dv^2 \wedge dv^3 + dv^4\wedge dv^5 \wedge dv^6
      \\
      + dv^7\wedge(dv^1\wedge dv^4 + dv^2\wedge dv^5 + dv^3\wedge dv^6).
    \end{multline}
    (case X in \cite{gurevich64:_invariants}). By using the map $i(\omega)$:
    \begin{align}
      i(\omega)&=\frac{1}{3}(v^1dv^2\wedge dv^3-v^2dv^1\wedge dv^3
                 +v^3dv^1\wedge dv^2+\\
               &\hphantom{ciao}v^4dv^5\wedge dv^6-v^5dv^4\wedge dv^6
                 +v^6dv^4\wedge dv^5+\\
               &\hphantom{ciao}v^7dv^1\wedge dv^4-v^1dv^7\wedge dv^4
                 +v^4dv^7\wedge dv^1+\\
               &\hphantom{ciao}v^7dv^2\wedge dv^5-v^2dv^7\wedge dv^5
                 +v^5dv^7\wedge dv^2+\\
               &\hphantom{ciao}v^7dv^3\wedge dv^6-v^3dv^7\wedge dv^6
                 +v^6dv^7\wedge dv^3)
    \end{align}
    Then with the affine projection $v^7=1, dv^7=0$:
    \begin{align}
      i(\omega)=&\frac{1}{3}(v^3dv^1\wedge dv^2-v^2dv^1\wedge dv^3
                  +v^3dv^1\wedge dv^2+\\
                &\hphantom{ci}v^4dv^5\wedge dv^6-v^5dv^4\wedge dv^6
                  +v^6dv^4\wedge dv^5+\\
                &\hphantom{ci}dv^1\wedge dv^4+dv^2\wedge dv^5+dv^3\wedge dv^6)
    \end{align}
    Then, the associated 2-form is (up to a factor)
    \begin{equation}
      g_{ij}^{1}=\begin{pmatrix}
        0&v^3&-v^2&1&0&0\\
        -v^3&0&v^1&0&1&0\\
        v^2&-v^1&0&0&0&1\\
        -1&0&0&0&v^6&-v^5\\
        0&-1&0&-v^6&0&v^4\\
        0&0&-1&v^5&-v^4&0
      \end{pmatrix}
    \end{equation}
    and $\text{det}(g^{1}_{ij})=(v^1v^4+v^2v^5+v^3v^6-1)^2$.

\item We have the $3$-form
\begin{multline}
  \omega = dv^1\wedge dv^2\wedge dv^3+dv^4\wedge dv^5\wedge dv^6
  \\
  +(dv^1\wedge dv^4+dv^2\wedge dv^5)\wedge dv^7
\end{multline}
(case IX in \cite{gurevich64:_invariants}). In the affine chart (removing the
factor $1/3$),
\begin{align}
  3i(\omega)&=v^1dv^2\wedge dv^3-v^2dv^1\wedge dv^3
             +v^3dv^1\wedge dv^2+\\
&\hphantom{ciao}v^4dv^5\wedge dv^6-v^5dv^4\wedge dv^6+v^6dv^4\wedge dv^5+\\
&\hphantom{ciao}dv^1\wedge dv^4+dv^2\wedge dv^5
\end{align}
The leading coefficient of the associated operator is
\begin{equation}
g_{ij}^2=\begin{pmatrix}
0&v^3&-v^2&1&0&0\\
-v^3&0&v^1&0&1&0\\
v^2&-v^1&0&0&0&0\\
-1&0&0&0&v^6&-v^5\\
0&-1&0&-v^6&0&v^4\\
0&0&0&v^5&-v^4&0
\end{pmatrix}
\end{equation}
we have $\text{det}(g^2_{ij})=(v^1v^4+v^2v^5)^2$.

\item We have the $3$-form
\begin{equation}
\omega = dv^1\wedge dv^2\wedge dv^3+dv^4\wedge dv^5\wedge dv^6
+dv^1\wedge dv^4\wedge dv^7
\end{equation}
(case VIII in \cite{gurevich64:_invariants}). In the affine chart,
\begin{multline}
  3i(\omega)=v^1dv^2\wedge dv^3-v^2dv^1\wedge dv^3
             +v^3dv^1\wedge dv^2
  \\
  +v^4dv^5\wedge dv^6-v^5dv^4\wedge dv^6+v^6dv^4\wedge dv^5+dv^1\wedge dv^4
\end{multline}
The leading coefficient of the associated operator is
\begin{equation}
g_{ij}^3=\begin{pmatrix}
0&v^3&-v^2&1&0&0\\
-v^3&0&v^1&0&0&0\\
v^2&-v^1&0&0&0&0\\
-1&0&0&0&v^6&-v^5\\
0&0&0&-v^6&0&v^4\\
0&0&0&v^5&-v^4&0
\end{pmatrix}
\end{equation}
we have $\text{det}(g^3_{ij})=(v^1v^4)^2$.

\item We have the $3$-form
\begin{equation}
  \omega = dv^4\wedge dv^5\wedge dv^6+dv^7(du^1\wedge dv^4
  +dv^2\wedge dv^5+dv^3\wedge dv^6)
\end{equation}
(case VII in \cite{gurevich64:_invariants}). In the affine chart,
\begin{multline}
  3i(\omega)=v^4dv^5\wedge dv^6-v^5dv^4\wedge dv^6+v^6dv^4\wedge dv^5
  \\
  +dv^1\wedge dv^4+dv^2\wedge dv^5+dv^3\wedge dv^6
\end{multline}
The leading coefficient of the associated operator is
\begin{equation}
g_{ij}^4=\begin{pmatrix}
0&0&0&1&0&0\\
0&0&0&0&1&0\\
0&0&0&0&0&1\\
-1&0&0&0&v^6&-v^5\\
0&-1&0&-v^6&0&v^4\\
0&0&-1&v^5&-v^4&0
\end{pmatrix}
\end{equation}
we have $\text{det}(g^4_{ij})=1$.

\item  We have the $3$-form
\begin{equation}
\omega = dv^7\wedge(dv^1\wedge dv^4+dv^2\wedge dv^5+dv^3\wedge dv^6)
\end{equation}
(case VI in \cite{gurevich64:_invariants}).
In the affine chart we have,
\begin{equation}
3i(\omega)=dv^1\wedge dv^4+dv^2\wedge dv^5+dv^3\wedge dv^6
\end{equation}
The leading coefficient of the associated operator is
\begin{equation}
g_{ij}^6=\begin{pmatrix}
0&0&0&1&0&0\\
0&0&0&0&1&0\\
0&0&0&0&0&1\\
-1&0&0&0&0&0\\
0&-1&0&0&0&0\\
0&0&-1&0&0&0
\end{pmatrix}
\end{equation}
we have $\text{det}(g^6_{ij})=1$.
\end{enumerate}

\paragraph{The case $n=8$.} We will follow the classification of trivectors in
dimension $9$ \cite{vinberg88:_class_trivec_dimen_space}. We will use the
isomorphism between $\mathbb{C}^{n+1}$ and $(\mathbb{C}^{n+1})^*$ defined by a
basis and its dual in order to put trivectors and $3$-forms into
correspondence. We recall that a trivector is said to be \emph{semisimple} if
its equivalence class is closed in the space of all trivectors, whereas it is
said to be \emph{nilpotent} if the closure of this class contains the zero
form.  Every trivector $u$ can be uniquely written as the sum $u=p+e$, where
$p$ is a semisimple trivector and $e$ is a nilpotent trivector such that
$p\wedge e=0$.

Semisimple trivectors $p$ are divided into seven different families for each of
which all possible nilpotent parts are provided.  Let us introduce the
following $3$-forms:
\begin{align}
p_1&=dv^1\wedge dv^2\wedge dv^3+dv^4\wedge dv^5\wedge dv^6+dv^7\wedge dv^8\wedge dv^9\\
p_2&=dv^1\wedge dv^4\wedge dv^7+dv^2\wedge dv^5\wedge dv^8+dv^3\wedge dv^6\wedge dv^9\\
p_3&=dv^1\wedge dv^5\wedge dv^9+dv^2\wedge dv^6 \wedge dv^7+dv^3\wedge dv^4\wedge dv^8\\
p_4&=dv^1\wedge dv^6\wedge dv^8+dv^2\wedge dv^4\wedge dv^9+dv^3\wedge dv^5\wedge dv^7
\end{align}

Every semisimple trivector is equivalent to a trivector whose corresponding
$3$-form is of the type
\begin{equation}\label{eq:28}
p=\lambda_1p_1+\lambda_2p_2+\lambda_3p_3+\lambda_4p_4,
\end{equation}
where the coefficients are determined up to a linear transformation from a
group generated by complex reflections of order $3$
\cite{vinberg88:_class_trivec_dimen_space}.

The first family of $3$-forms is generated by $p$ only as in~\eqref{eq:28};
more precisely, it consists only of semisimple trivectors ($e=0$).  The
coefficients $\lambda_i$ must satisfy a complicated system of algebraic
inequalities \cite{vinberg88:_class_trivec_dimen_space}.
The stabilizer subgroup $S$ of this class is a cyclic Abelian group of order
81.  The corresponding non-degenerate $2$-form in this class is
\begin{equation}
  g_{ij}^{(1)}
  =  \begin {pmatrix}  0&\lambda_1 v^3&-\lambda_1 v^2&
\lambda_2 v^7&\lambda_3&\lambda_4 v^8&-\lambda_2 v^4&-\lambda_4 v^6
\\  { }-\lambda_1 v^3&0&\lambda_1 v^1&\lambda_4&\lambda_2
 v^8&\lambda_3 v^7&-\lambda_3 v^6&-\lambda_2 v^5\\  { }
\lambda_1 v^2&-\lambda_1 v^1&0&\lambda_3 v^8&\lambda_4 v^7&\lambda_2&-
\lambda_4 v^5&-\lambda_3 v^4\\  { }-\lambda_2 v^7&-
\lambda_4&-\lambda_3 v^8&0&\lambda_1 v^6&-\lambda_1 v^5&\lambda_2 v^1&
\lambda_3 v^3\\  { }-\lambda_3&-\lambda_2 v^8&-\lambda_4
v^7&-\lambda_1 v^6&0&\lambda_1 v^4&\lambda_4 v^3&\lambda_2 v^2
\\  { }-\lambda_4 v^8&-\lambda_3 v^7&-\lambda_2&\lambda_1
 v^5&-\lambda_1 v^4&0&\lambda_3 v^2&\lambda_4 v^1
\\  { }\lambda_2 v^4&\lambda_3 v^6&\lambda_4 v^5&-
\lambda_2 v^1&-\lambda_4 v^3&-\lambda_3 v^2&0&\lambda_1
\\  { }\lambda_4 v^6&\lambda_2 v^5&\lambda_3 v^4&-
\lambda_3 v^3&-\lambda_2 v^2&-\lambda_4 v^1&-\lambda_1&0\end {pmatrix}
\end{equation}

The second family is generated by the semisimple trivector
\begin{equation}
p=\lambda_1p_1+\lambda_2p_2-\lambda_3p_3,
\end{equation}
again with $\lambda_i$ fulfilling an algebraic constraint.  The coefficients
are determined up to a linear transformation generated by complex reflections.
The possible nontrivial nilpotent parts are two:
\begin{align}
&e_1=dv^1\wedge dv^6\wedge dv^8+dv^2\wedge dv^4\wedge dv^9\\
&e_2=dv^1\wedge dv^6\wedge dv^8
\end{align}
Here, the dimension of the stabilizer $S$ is $0$ for $e_1$ and $1$ for $e_2$.
By summing $p+e_i$ and applying the correspondence, we finally obtain the
following two $2$-forms:
\begin{equation} \footnotesize
  g_{ij}^{(2)}= \begin {pmatrix}  0&\lambda_1 u^3&-\lambda_1 u^2&
\lambda_2 u^7&-\lambda_3&u^8&-\lambda_2 u^4&-u^6\\  { }-
\lambda_1 u^3&0&\lambda_1 u^1&1&\lambda_2 u^8&-\lambda_3 v^7&\lambda_3
 u^6&-\lambda_2 u^5\\  { }\lambda_1 u^2&-\lambda_1 u^1&0
&-\lambda_3 u^8&0&\lambda_2&0&\lambda_3 u^4\\  { }-
\lambda_2 v^7&-1&\lambda_3 u^8&0&\lambda_1 u^6&-\lambda_1 u^5&\lambda_2
 u^1&-\lambda_3 u^3\\  { }\lambda_3&-\lambda_2 u^8&0&-
\lambda_1 u^6&0&\lambda_1 u^4&0&\lambda_2 u^2\\  { }-u^8
&\lambda_3 v^7&-\lambda_2&\lambda_1 u^5&-\lambda_1 u^4&0&-\lambda_3 u^2
&u^1\\  { }\lambda_2 u^4&-\lambda_3 u^6&0&-\lambda_2 u^1
&0&\lambda_3 u^2&0&\lambda_1\\  { }u^6&\lambda_2 u^5&-
\lambda_3 u^4&\lambda_3 u^3&-\lambda_2 u^2&-u^1&-\lambda_1&0
\end {pmatrix}
\end{equation}
\begin{equation}\footnotesize
g_{ij}^{(3)}=   \begin {pmatrix}  0&\lambda_1 u^3&-\lambda_1 u^2&
\lambda_2 v^7&-\lambda_3&u^8&-\lambda_2 u^4&-u^6\\  { }-
\lambda_1 u^3&0&\lambda_1 u^1&0&\lambda_2 u^8&-\lambda_3 v^7&\lambda_3
 u^6&-\lambda_2 u^5\\  { }\lambda_1 u^2&-\lambda_1 u^1&0
&-\lambda_3 u^8&0&\lambda_2&0&\lambda_3 u^4\\  { }-
\lambda_2 v^7&0&\lambda_3 u^8&0&\lambda_1 u^6&-\lambda_1 u^5&\lambda_2
 u^1&-\lambda_3 u^3\\  { }\lambda_3&-\lambda_2 u^8&0&-
\lambda_1 u^6&0&\lambda_1 u^4&0&\lambda_2 u^2\\  { }-u^8
&\lambda_3 v^7&-\lambda_2&\lambda_1 u^5&-\lambda_1 u^4&0&-\lambda_3 u^2
&u^1\\  { }\lambda_2 u^4&-\lambda_3 u^6&0&-\lambda_2 u^1
&0&\lambda_3 u^2&0&\lambda_1\\  { }u^6&\lambda_2 u^5&-
\lambda_3 u^4&\lambda_3 u^3&-\lambda_2 u^2&-u^1&-\lambda_1&0
\end {pmatrix}
\end{equation}
In both cases the determinants are non-zero.

The total number of non-degenerate two-forms in the classification is $132$. For
reasons of space, we will not list elements in the families $3$--$7$; however,
we are ready to privately provide the list of non-degenerate two-forms to the
interested reader.

\begin{remark}
  The dimension of the space of $3$-forms $\wedge^3(\mathbb{C}^{n+1})^*$ grows
  with the dimension $n$ in a much faster way than the dimension of
  $SL(\mathbb{C}^{n+1})$. However, for small values of $n$ the dimension of the
  group is prevailing: this is the reason for the lack of non-trivial classes
  when $n\leq 4$. The same argument shows that a classification for higher
  values of $n$ does not make sense, in view of the huge number of free
  parameters that the generic element would depend on.
\end{remark}

\section{Systems of PDEs with second-order Hamiltonian
  structure}
\label{sec:second-order-hhos}

In \cite[Theorem 10]{vergallo20:_homog_hamil} it was proved that the necessary
conditions for a second-order HHO $P$~\eqref{1b} to be a Hamiltonian operator
for a quasilinear system of first-order conservation laws~\eqref{2a} are
\begin{subequations}\label{eq:37}
  \begin{gather}
    \label{eq:451}
    g_{qj}V^j_{,p} + g_{pj}V^j_{,q} = 0,
    \\
    \label{eq:38}
    g_{qk}V^k_{,pl} + g_{pq,k}V^k_{,l} + g_{qk,l}V^{k}_{,p}= 0.
  \end{gather}
\end{subequations}
This result is analogue to the results in \cite{tsarev85:_poiss_hamil}
concerning first-order HHOs and quasilinear systems of first-order PDEs, and is
obtained by a method introduced in \cite{KerstenKrasilshchikVerbovetsky:HOpC}
and later adapted to HHOs~\cite{FPV17:_system_cl}.

We observe that the above conditions have no direct differential-geometric
interpretation as they are derived in flat coordinates of the connection
$\Gamma^i_{jk}$ (see the Introduction). However, we will be able to parametrize
the space of solutions of the above equations, thus exhibiting large families
of systems of PDEs that are Hamiltonian with respect to second-order
Hamiltonian operators. Interesting properties of such systems will be
thoroughly investigated.

\subsection{Solution of the compatibility conditions}
\label{sec:hamiltonian-systems}

We will now solve completely the system of compatibility conditions between a
quasilinear system of first-order PDEs~\eqref{eq:37}. We will first prove that
the system is in involution, then we will parametrize its solutions.
\begin{proposition}\label{pr:solut-comp-cond}
  The system \eqref{eq:37} is in involution.
\end{proposition}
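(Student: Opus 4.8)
The plan is to treat \eqref{eq:37} as an overdetermined PDE system for the unknown functions $V^i(u^j)$ and show that it is in involution in the sense of the Cartan–Kähler / prolongation criterion: all integrability conditions obtained by cross-differentiating are algebraic consequences of the system together with its first prolongation, so that no new conditions of higher order arise. Concretely, I would regard \eqref{eq:451} as a set of first-order equations and \eqref{eq:38} as a set of second-order equations, introduce the symbol and its first prolongation, and verify Cartan's test by comparing the dimension of the prolonged symbol with the dimension predicted by the characters. A cleaner route, which I would try first, is to solve \eqref{eq:451} explicitly: since $g_{ij}=T_{ijk}u^k+g^0_{ij}$ is skew-symmetric and non-degenerate, \eqref{eq:451} says that the bilinear form $g_{pj}V^j_{,q}$ is symmetric in $p,q$, i.e. that the $(1,1)$-tensor $W^i_j := g^{ik}$ applied appropriately makes $V_{,q}$ "$g$-self-adjoint"; equivalently $V_i := g_{ij}V^j$ (lowering with the symplectic-type form) must satisfy $V_{i,q}=V_{q,i}$, hence locally $V_i = \partial_i W$ for a single potential function $W(u)$. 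This reduces \eqref{eq:451} to the single substitution $V^i = g^{ij}\partial_j W$, after which only \eqref{eq:38} remains as a genuine constraint on $W$.

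The second step is then to substitute $V^i=g^{ij}W_{,j}$ into \eqref{eq:38} and determine the resulting PDE on $W$. Using $g_{ij}=T_{ijk}u^k+g^0_{ij}$, so that $g_{ij,k}=T_{ijk}$ is constant and totally skew, the derivatives $V^k_{,l}$ and $V^k_{,pl}$ expand into expressions linear in $W$ and its first, second and third derivatives, with coefficients that are polynomial (in fact affine or quadratic) in the $g^{ij}$ and constant in $T$. I expect massive cancellation: the terms $g_{pq,k}V^k_{,l}+g_{qk,l}V^k_{,p}$ are built from $T$-contractions, and the skew-symmetry $T_{ijk}=T_{[ijk]}$ together with $g^{ij}=-g^{ji}$ should collapse \eqref{eq:38} to a third-order linear PDE for $W$ whose symbol is the totally symmetric part of something expressible through $g$ and $T$. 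Once \eqref{eq:38} is rewritten as such a linear system for $W$, involutivity of the original system \eqref{eq:37} is equivalent to involutivity of this reduced system, and one checks Cartan's test on it — typically easy for a linear third-order system with constant-rank symbol.

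The main obstacle will be the bookkeeping in that substitution: verifying that the would-be new integrability conditions (e.g. from differentiating \eqref{eq:38} and antisymmetrizing, or from the compatibility of \eqref{eq:38} with \eqref{eq:451}) really do vanish modulo the system, rather than producing a genuinely new algebraic constraint tying $W$ to the specific tensor $T$. This is where the skew-symmetry of $T$ in all three indices and the constancy of both $T$ and $g^0$ must be used in full; I would organize the check by introducing the connection symbols $\Gamma^i_{jk}$ (flat by hypothesis, per the cited results of \cite{potemin86:_poiss,doyle93:_differ_poiss}) and rewriting everything covariantly, so that the closure of the prolongation becomes a statement about the curvature and the covariant derivative of $T$ — both of which vanish. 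If the explicit-potential route does encounter an unexpected obstruction, the fallback is the direct symbol computation: write the principal symbols $\sigma_1$ of \eqref{eq:451} and $\sigma_2$ of \eqref{eq:38}, compute their first prolongations, and confirm $\dim \sigma^{(1)} = $ (the sum of Cartan characters), which is a finite linear-algebra verification independent of the nonlinearity in $u$.
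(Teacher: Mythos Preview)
Your main route contains a sign error that derails it. Equation~\eqref{eq:451} reads $g_{qj}V^j_{,p}+g_{pj}V^j_{,q}=0$, which says that $B_{pq}:=g_{pj}V^j_{,q}$ is \emph{skew}-symmetric in $p,q$, not symmetric. Consequently, lowering with $g$ to set $V_i:=g_{ij}V^j$ does \emph{not} give a closed $1$-form: a short computation using $g_{ij,k}=T_{ijk}$ and the total skew-symmetry of $T$ shows instead that $V_{i,q}+V_{q,i}=0$. So there is no scalar potential $W$ with $V_i=\partial_i W$, and the reduction ``substitute $V^i=g^{ij}W_{,j}$ into \eqref{eq:38}'' never gets off the ground. (What actually happens, cf.\ Theorem~\ref{th:solut-comp-cond}, is that $W_j:=g_{jk}V^k$ turns out to be affine with skew-symmetric linear part; but that is the \emph{solution} of the system, not a device for proving involutivity.)

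More broadly, the Cartan--K\"ahler machinery you propose is heavier than what is needed. The paper's proof is a two-line direct check: differentiate \eqref{eq:451} in $u^l$ and observe that the result is an identity once you substitute \eqref{eq:38} for the two groups of second-derivative terms; then rewrite \eqref{eq:38} as $(g_{qk}V^k_{,p})_{,l}+g_{pq,k}V^k_{,l}=0$ and check that the cross-derivative $V^k_{,plm}=V^k_{,pml}$ gives an identity because $g_{pq,kl}=0$. No symbols, characters, or connections are required---the constancy of $g_{ij,k}=T_{ijk}$ does all the work. Your fallback ``finite linear-algebra verification'' would ultimately reproduce this, but the direct substitution is both shorter and more transparent.
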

\begin{proof}
  Let us derive \eqref{eq:4}:
  \begin{equation}\label{aa1}
    g_{qj,l}V^j_{,p}+g_{qj}V^j_{,pl}+g_{pj,l}V^j_{,q}+g_{pj}V^j_{,ql}=0
  \end{equation}
  then, by using condition \eqref{eq:38} we can substitute
  \begin{gather}
    g_{qj}V^j_{,pl}+g_{qj,l}V^j_{,p}=-g_{pq,j}V^j_{,l}\\
    g_{pj}V^j_{,ql}+g_{pj,l}V^j_{,q}=-g_{qp,j}V^k_{,l}
  \end{gather} in \eqref{aa1}, which yields
  \begin{equation}
    -g_{pq,j}V^j_{,l}-g_{qp,j}V^j_{,l} =0,
  \end{equation}
  which is an identity.

  The condition \eqref{aa1} can be rewritten as
  \begin{equation}
    \label{eq:10}
    (g_{qk}V^k_{,p})_{,l} + g_{pq,k}V^k_{,l} = 0.
  \end{equation}
  From the consistency condition $V^k_{,plm}=V^k_{,pml}$ we obtain
  \begin{equation}
    \label{eq:11}
    (g_{qk}V^k_{,p})_{,lm} + (g_{pq,k}V^k_{,l})_{,m} =
    \\
    (g_{qk}V^k_{,p})_{,ml} + (g_{pq,k}V^k_{,m})_{,l}
  \end{equation}
  which yields the identity $g_{pq,k}V^k_{,lm} = g_{pq,k}V^k_{,ml}$
  in view of $g_{pq,kl}=0$.
\end{proof}

\medskip

The above Proposition shows that, since \eqref{aa1} expresses all second-order
derivatives, the general solution of the system depends on no more than $n+n^2$
parameters. The equations \eqref{eq:4} impose further $n(n-1)/2$ additional
restrictions, so that the total number of arbitrary constants in the general
solution is
\begin{equation}\label{eq:12}
n+n^2 - \frac{n(n-1)}{2} = n(n+3)/2.
\end{equation}

Now, we will solve the system \eqref{eq:37}.

\begin{theorem}\label{th:solut-comp-cond}
  Let $C$ be a second-order HHO. Then, the (explicit) solution of the
  system~\eqref{eq:37} is the vector $V^i$ given by
  \begin{equation}
    V^i = g^{ij}W_j,
  \end{equation}
  where $W_j$ is the covector
  \begin{equation}
    W_j=A_{jl}u^l+B_j
  \end{equation}
  where $A_{ij}=-A_{ji}$, $B_i$ are arbitrary constants.
\end{theorem}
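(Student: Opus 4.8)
The plan is to introduce the covector $W_j := g_{jk}V^k$ (so that $V^i = g^{ij}W_j$; this is an invertible substitution because $\det(g)\neq0$), to rewrite the compatibility conditions \eqref{eq:37} entirely in terms of $W$, and to observe that they collapse to the single requirement that $W$ be an affine map with skew-symmetric linear part.

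First I would differentiate $W_q = g_{qk}V^k$ and use $g_{qk,p}=T_{qkp}$, which by \eqref{aa3} is a constant, totally skew-symmetric tensor, to get $W_{q,p} = T_{qkp}V^k + g_{qk}V^k_{,p}$. Because $T_{qkp}=-T_{pkq}$, the symmetric part is
\[
  W_{q,p}+W_{p,q} \;=\; g_{qk}V^k_{,p}+g_{pk}V^k_{,q},
\]
so \eqref{eq:451} is \emph{equivalent} to $W_{q,p}+W_{p,q}=0$. Differentiating this identity once more, the quantity $W_{q,pl}$ is antisymmetric in the pair $(q,p)$ and, being a mixed second partial, symmetric in the pair $(p,l)$; the elementary fact that a tensor which is symmetric in one index pair and antisymmetric in an overlapping pair must vanish then gives $W_{q,pl}=0$. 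Hence $W_j = A_{jl}u^l + B_j$ with $B_j$ constant and $A_{jl}:=W_{j,l}$ constant and skew, $A_{jl}=-A_{lj}$. Conversely, any such $W$ obviously satisfies $W_{q,p}+W_{p,q}=0$, so \eqref{eq:451} is equivalent to $V^i=g^{ij}W_j$ with $W$ of the stated affine form.

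Next I would check that, for such a $W$, the remaining condition \eqref{eq:38} is automatic. Grouping its first and third terms as $\partial_l\!\big(g_{qk}V^k_{,p}\big)$ and inserting $g_{qk}V^k_{,p}=W_{q,p}-T_{qkp}V^k$ (with $T_{qkp}$ constant) turns \eqref{eq:38} into
\[
  W_{q,pl}+\big(g_{pq,k}-T_{qkp}\big)V^k_{,l}=0 .
\]
Here $W_{q,pl}=0$ since $W$ is affine, and $g_{pq,k}-T_{qkp}=T_{pqk}-T_{qkp}=0$ because $T_{pqk}\mapsto T_{qkp}$ is a cyclic, hence even, permutation of the totally skew tensor $T$. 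Therefore \eqref{eq:38} holds identically. Combining the two steps, the solution set of the full system \eqref{eq:37} coincides with $\{\,V^i=g^{ij}W_j : W_j=A_{jl}u^l+B_j,\ A_{jl}=-A_{lj},\ B_j\ \text{constant}\,\}$, which is the assertion; a direct substitution of this $V^i$ back into \eqref{eq:37} provides an independent verification.

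The only delicate part is sign bookkeeping: one must consistently track the signs produced by the skew tensor $g$, its constant totally skew-symmetric ``derivative'' $T$, and the raised-index inverse $g^{ij}$, and state carefully the vanishing lemma for tensors that are simultaneously symmetric and antisymmetric in overlapping index pairs. Everything else is a one-line substitution. The genuine content is the reformulation of \eqref{eq:451}--\eqref{eq:38} as ``$W$ is affine with skew-symmetric linear part'', together with the byproduct that \eqref{eq:38} is already a consequence of \eqref{eq:451}.
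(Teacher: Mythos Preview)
Your proof is correct and uses the same key substitution $W_j=g_{jk}V^k$ as the paper, but the logical organisation is different and slightly sharper. The paper computes $W_{j,ab}$ directly, identifies it with the left-hand side of \eqref{eq:38} (using $g_{jk,ab}=0$ and the cyclic identity $g_{jk,a}=g_{aj,k}$), concludes $W_{j,ab}=0$ from \eqref{eq:38}, and only then uses \eqref{eq:451} to force $A_{jp}+A_{pj}=0$. You instead start from \eqref{eq:451}, translate it as $W_{q,p}+W_{p,q}=0$, and invoke the standard ``symmetric in one pair, antisymmetric in an overlapping pair'' lemma on $W_{q,pl}$ to obtain both the affinity of $W$ and the skewness of $A$ in one stroke; afterwards you verify that \eqref{eq:38} reduces identically to $W_{q,pl}+(T_{pqk}-T_{qkp})V^k_{,l}=0$. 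The gain of your route is the byproduct that \eqref{eq:38} is a \emph{consequence} of \eqref{eq:451} (for $g$ of the form \eqref{aa3}), something the paper's argument does not make explicit; the paper's route, on the other hand, avoids appealing to the auxiliary symmetry/antisymmetry lemma. Both arguments rest on exactly the same identities for $T_{ijk}$, so the difference is one of packaging rather than substance.
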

\begin{proof}
  \begin{equation}\label{ght}
    (W_j)_{,ab}=g_{jk}V^k_{,ab}+g_{jk,b}V^k_{,a}+g_{jk,ab}V^k+g_{jk,a}V^k_{,b}
  \end{equation}
  Since $g_{jk,ab}=0$ and $g_{bj,k}= g_{jk,b}$ the above equation becomes the
  right-hand side of~\eqref{eq:38}. Then $W_j=A_{jl}u^l+B_j$. Moreover, we have
  the following identity:
  \begin{equation}
    \label{eq:1}
    W_{j,p} + W_{p,j} = g_{jl}V^l_{,p} + g_{pl}V^l_{,j}
  \end{equation}
  which is the right-hand side of~\eqref{eq:4}. But we have
  \begin{equation}\label{eq:2}
    W_{j,p}+W_{p,j} = A_{jp}+A_{pj},
  \end{equation}
  which completes the proof.
\end{proof}

\begin{remark}
  The above solution of the system~\eqref{eq:37} is the most general: indeed,
  $W_i$ depends on $n(n+1)/2$ arbitrary constants, and $V^i$ is defined up to
  $n$ arbitrary constants (as it enters the right-hand side of \eqref{2a}). The
  total figure is equal to the dimension count following the proof of
  Proposition~\ref{pr:solut-comp-cond}.
\end{remark}

\begin{corollary}
  The fluxes $V^i$ are rational functions of the form:
  \begin{equation}
    V^i=\frac{Q}{\operatorname{Pf}(g)}
  \end{equation}
  where $Q$ is a polynomial of degree $n/2$ and the denominator is
  $\operatorname{Pf}(g)$.
\end{corollary}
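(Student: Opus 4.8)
The plan is to feed the explicit solution $V^i = g^{ij}W_j$ of Theorem~\ref{th:solut-comp-cond} into the classical Pfaffian formula for the inverse of a skew-symmetric matrix and then read off degrees. By~\eqref{aa3} every entry $g_{ij} = T_{ijk}u^k + g^0_{ij}$ is an affine function of $u$ (linear in the homogeneous coordinates $v^\lambda$), and $(g_{ij})$ is skew-symmetric of even order $n$. First I would recall the standard facts for such a matrix: $\det(g) = \operatorname{Pf}(g)^2$, and the adjugate factors as $\operatorname{adj}(g) = \operatorname{Pf}(g)\,\widehat g$, where (up to a universal nonzero constant) the entry $\widehat g^{\,ij}$ is $(-1)^{i+j}$ times the Pfaffian of the $(n-2)\times(n-2)$ skew-symmetric matrix obtained from $(g_{ij})$ by deleting rows and columns $i,j$. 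Hence
\[
g^{ij} = \frac{(\operatorname{adj}g)^{ij}}{\det g} = \frac{\widehat g^{\,ij}}{\operatorname{Pf}(g)}.
\]

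Then comes a purely combinatorial degree count. The Pfaffian $\operatorname{Pf}(g)$ is a signed sum of products of $n/2$ entries of $(g_{ij})$, each of degree one, so it is a polynomial of degree $n/2$ in $u$; this is consistent with the determinants computed in Section~\ref{sec:proj-class-hamilt-1} (e.g.\ $\det(g^1_{ij}) = (v^1v^4+v^2v^5+v^3v^6-1)^2$, a perfect square of a degree-$3$ polynomial for $n=6$). Each minor $\widehat g^{\,ij}$ is the Pfaffian of a skew-symmetric block of order $n-2$, hence a polynomial of degree $(n-2)/2 = n/2 - 1$. Since $W_j = A_{jl}u^l + B_j$ is affine, we conclude
\[
V^i = g^{ij}W_j = \frac{\widehat g^{\,ij}W_j}{\operatorname{Pf}(g)} =: \frac{Q^i}{\operatorname{Pf}(g)},
\]
with $Q^i = \widehat g^{\,ij}W_j$ a polynomial of degree at most $(n/2-1)+1 = n/2$, which is the assertion (the superscript $i$ on $Q$ being suppressed in the statement, and genericity of $T_{ijk}$, $g^0_{ij}$, $A_{ij}$, $B_j$ guaranteeing that the bound is attained).

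The only step needing genuine care — as opposed to mechanical bookkeeping — is the Pfaffian–adjugate identity $\operatorname{adj}(g) = \operatorname{Pf}(g)\,\widehat g$, because it is precisely what forces the denominator to be $\operatorname{Pf}(g)$ rather than $\det(g) = \operatorname{Pf}(g)^2$: one factor of $\operatorname{Pf}(g)$ cancels between $\operatorname{adj}(g)$ and $\det(g)$. I would either cite it as a standard fact about skew-symmetric matrices (it follows by differentiating $\det(g) = \operatorname{Pf}(g)^2$ with respect to $g_{ij}$ and using $\partial\det(g)/\partial g_{ij} = (\operatorname{adj}g)^{ji}$), or verify it directly in the low dimensions $n=2,4,6,8$ that are actually relevant here. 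Everything else is the elementary degree accounting sketched above.
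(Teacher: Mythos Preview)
Your proposal is correct and follows essentially the same approach as the paper: both use $V^i=g^{ij}W_j$ from Theorem~\ref{th:solut-comp-cond}, invoke the Pfaffian structure of the inverse of a skew-symmetric matrix (the paper cites Muir~\cite{Muir:TtD60} for this, you spell out the adjugate identity explicitly) to get entries of $g^{ij}$ as degree-$(n-2)/2$ polynomials over $\operatorname{Pf}(g)$, and then multiply by the affine $W_j$ to reach degree $n/2$ in the numerator. Your version is simply more detailed on the one point the paper delegates to a reference.
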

\begin{proof}
  We have $V^i=g^{is}W_s$, where $g^{ij}$ is the inverse matrix of $g_{ij}$. By
  means of properties of the determinant of skew-symmetric matrices \cite{Muir:TtD60}
  the inverse matrix has rational functions entries where the numerator has
  degree $(n-2)/2$ and the denominator is the Pfaffian of $g_{ij}$, whose
  degree is at most $n/2$. Since $W_s$ are linear functions, the statement
  follows.
\end{proof}

\begin{corollary}\label{cor:eigenvalues}
  The eigenvalues of the matrix $V^i_j$ have algebraic multiplicity $2$.
\end{corollary}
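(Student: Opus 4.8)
The plan is to reduce the statement to the classical fact that the product of two skew-symmetric matrices has characteristic polynomial equal to a perfect square. Write $V$ for the Jacobian matrix $(V^i_{,j})$. Using the formula $V^i=g^{is}W_s$ from the previous Corollary, with $W_s=A_{sl}u^l+B_s$, I would first compute $V^i_{,j}$ explicitly. Differentiating and using $g^{is}_{,j}=-g^{ip}g_{pq,j}g^{qs}$ together with $g_{pq,j}=T_{pqj}$ (from~\eqref{aa3}) and $W_{s,j}=A_{sj}$, one obtains
\[
V^i_{,j} = -g^{ip}T_{pqj}V^q + g^{is}A_{sj}.
\]

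Next I would lower the contravariant index: set $K_{mj}:=g_{mi}V^i_{,j}$. The identity above gives $K_{mj}=-T_{mqj}V^q+A_{mj}$. Since $T_{\lambda\mu\nu}$ is totally skew-symmetric, the transposition $m\leftrightarrow j$ sends $T_{mqj}$ to $-T_{mqj}$, and $A_{mj}$ is skew-symmetric by hypothesis; hence $K_{mj}$ is skew-symmetric in $m,j$. In matrix notation this reads $gV=K$ with $K=-K^{\top}$, i.e.\ $V=g^{-1}K$ is the product of the inverse of the (skew-symmetric) matrix $g$ with the skew-symmetric matrix $K$.

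I would then conclude via the Pfaffian identity. The characteristic polynomial of $V$ is
\[
p(\lambda)=\det(\lambda I-g^{-1}K)=\frac{1}{\det g}\,\det(\lambda g-K).
\]
Since $\lambda g-K$ is skew-symmetric of even order $n$, we have $\det(\lambda g-K)=\operatorname{Pf}(\lambda g-K)^2$, and $\operatorname{Pf}(\lambda g-K)$ is a polynomial in $\lambda$ of degree $n/2$ with leading coefficient $\operatorname{Pf}(g)$. Therefore $p(\lambda)=\big(\operatorname{Pf}(\lambda g-K)/\operatorname{Pf}(g)\big)^2$ is the square of a monic polynomial of degree $n/2$, so every eigenvalue of $V^i_j$ occurs with even algebraic multiplicity, and generically each has multiplicity exactly $2$.

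The only mildly delicate point is the index bookkeeping: getting the sign in $g^{is}_{,j}$ right and checking that the relevant permutations of $T$ behave as claimed, so that after lowering the index the cross term survives precisely as $A_{mj}$ and the $T$-term is genuinely skew in $m,j$. Everything after that is the standard reduction of the determinant of a skew pencil to a Pfaffian square.
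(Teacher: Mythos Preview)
Your proof is correct and follows essentially the same route as the paper: lower the index of $V^i_{,j}-\lambda\delta^i_j$ with $g$ to obtain a skew-symmetric matrix (the paper writes this as $T_{hji}g^{ik}W_k+A_{hj}-\lambda g_{hj}$, which is your $K-\lambda g$), and then invoke the Pfaffian identity $\det=\operatorname{Pf}^2$ together with $\det g=\operatorname{Pf}(g)^2$ to conclude that the characteristic polynomial is a perfect square. Your write-up is in fact a bit more explicit about the final step than the paper's, and your closing caveat that multiplicity is \emph{exactly} $2$ only generically is a welcome precision.
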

\begin{proof}
The eigenvalues are computed by the characteristic polynomial:
$\det(V^i_j-\lambda\delta^i_j)$. Lowering one index we obtain a skew-symmetric
matrix:

\begin{align*}
  g_{hi}(V^i_j-\lambda\delta^i_j)=&g_{hi}(g^{ik}W_k)_{,j}-\lambda g_{hj}
  \\ &=-g_{hi,j}g^{ik}W_k+g_{hi}g^{ik}A_{kj}-\lambda g_{hj}
  \\ &=  T_{hji}g^{ik}W_k+A_{hj}-\lambda g_{hj}
\end{align*}
whose determinant is the square of its Pfaffian. Since $\det(g_{ij})$ is also a
perfect square we get the result.
\end{proof}

\begin{remark}
  The above result has important consequences on the integrability of the
  system \eqref{2a}. Indeed, in \cite{tsarev91:_hamil} the generalized
  hodograph method for the solution of semi-Hamiltonian quasilinear first-order
  systems is developed. However, one of the hypothesis in the above paper is
  that all eigenvalues of $V^i_j$ are distinct. We will see in
  Subsection~\ref{sec:haantj-tens-integr} that, at least in some cases,
  this does not prevent the semi-Hamiltonianity of the systems determined by
  second-order HHOs.
\end{remark}

\subsection{Projective geometry of the systems of PDEs}
\label{sec:proj-geom-hamilt}

Let us recall that for every hydrodynamic-type system
\begin{equation}
  u^i_t=V^i_j(u)u^j_x
\end{equation}
it is possible to associate a congruence
\begin{equation}
  y^i=u^iy^{n+1}+V^iy^{n+2}
\end{equation}
in auxiliary projective space $\mathbb{P}^{n+1}$ with homogeneous coordinates
$(y^1:\cdots : y^{n+2})$.
\begin{proposition}
  Let $u^i_t=V^i_ju^j_x$ be a system compatible with a second order HHO. Then
  the associated congruence is linear.
\end{proposition}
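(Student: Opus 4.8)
The plan is to recall what it means for a line congruence in $\mathbb{P}^{n+1}$ to be \emph{linear}, and then to verify the defining linear relations directly from the explicit form of the fluxes $V^i = g^{ij}W_j$ obtained in Theorem~\ref{th:solut-comp-cond}. A congruence — an $n$-parameter family of lines — is linear when the Pl\"ucker coordinates of its lines satisfy a system of linear equations, equivalently when the lines all meet a fixed linear subspace, or (the version most convenient here) when the congruence can be written as the zero locus of a system of bilinear forms of the shape $\omega_{AB}(u)\, p^{AB}=0$ with $\omega$ \emph{linear} in the parameters $u^i$. For the congruence $y^i = u^i y^{n+1} + V^i(u)\, y^{n+2}$ the line through the parameter value $u$ is spanned by the two points $P_1 = (u^1,\dots,u^n,1,0)$ and $P_2 = (V^1,\dots,V^n,0,1)$, so its Pl\"ucker coordinates are the $2\times 2$ minors of the $2\times(n+2)$ matrix with rows $P_1,P_2$. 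I would write these minors out: $p^{i\,n+1}=-V^i$, $p^{i\,n+2}=u^i$, $p^{ij}=u^iV^j-u^jV^i$, $p^{n+1\,n+2}=1$, and then look for the linear syzygies among them.

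The key computational step is to feed $V^i = g^{ij}W_j$ with $W_j = A_{jl}u^l + B_j$, $g_{ij}=T_{ijk}u^k+g^0_{ij}$, into the combination $g_{ki}p^{i\,n+1} = -g_{ki}V^i = -W_k$, and observe that $W_k$ is \emph{affine-linear} in $u$, while $g_{ki}p^{i\,n+2}=g_{ki}u^i$ is also affine-linear in $u$ (it is $g^0_{ki}u^i$ plus a quadratic term $T_{kil}u^iu^l$ that vanishes by skew-symmetry of $T$ in $i,l$ — indeed $T_{kil}u^iu^l=0$). So both $g_{ki}V^i$ and $g_{ki}u^i$ are linear in $u$. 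This is exactly the hook: I would show that the $n+1$ equations
\begin{equation}
  g_{ki}\, p^{i\,n+1} + W_k^{(0)}\, p^{n+1\,n+2} = 0,\qquad
  A_{ki}\, p^{i\,n+2} - B_k\, p^{n+1\,n+2} + g_{ki}p^{i\,n+1}\cdot(\text{suitable})=0
\end{equation}
hold identically on the congruence, where the coefficients $g_{ki}$ are \emph{linear} in the $v$-coordinates (recall $g_{ij}=T_{ijk}u^k+g^0_{ij}$), hence the Pl\"ucker coordinates of the congruence satisfy a system of equations whose coefficients are linear functions on $\mathbb{P}^{n+1}$. Cleaning this up, the relation $g_{ki}V^i = W_k = A_{kl}u^l+B_k$ rearranges to $g_{ki}p^{i\,n+1} + A_{kl}p^{l\,n+2} + B_k p^{n+1\,n+2}=0$ once one also uses $g_{ki}u^i$ linear — I expect the final form to be precisely $n$ bilinear relations $\Lambda_{k,AB}\,p^{AB}=0$ with $\Lambda$ linear in the homogeneous coordinates, which is the definition of a linear congruence as used in \cite{DePoiFaenziMezzettiRanestad17}.

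The main obstacle is bookkeeping: one must track the two blocks of Pl\"ucker coordinates ($p^{i\,n+1}$ versus $p^{i\,n+2}$ and $p^{ij}$) and check that \emph{every} equation defining the congruence — not just the obvious ones coming from $g_{ki}V^i=W_k$ — can be arranged to be linear-coefficiented, in particular that the quadratic-looking coordinates $p^{ij}=u^iV^j-u^jV^i$ do not force a genuinely quadratic relation. Here the skew-symmetry of $T$ and the identity $T_{kil}u^iu^l=0$ (together with $W$ being affine-linear) should collapse the apparent quadratic terms; this is the same mechanism that made $g_{ij}$ itself linear, so I expect no surprises, but it is the step that actually needs the Hamiltonian hypothesis via Theorem~\ref{th:solut-comp-cond} rather than just the general structure of a hydrodynamic congruence. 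Once the linear relations are exhibited, linearity of the congruence is immediate and the proposition follows.
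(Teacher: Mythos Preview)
Your approach is the right one --- start from $g_{ki}V^i = W_k$ and read this as a relation among Pl\"ucker coordinates --- and it is exactly what the paper does. But there is a genuine gap in how you finish.

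A \emph{linear} congruence is one whose Pl\"ucker coordinates satisfy linear relations with \emph{constant} coefficients, not with ``coefficients linear in $u$'' as you write. Your displayed relation
\[
  g_{ki}\,p^{i\,n+1} + A_{kl}\,p^{l\,n+2} + B_k\,p^{n+1\,n+2}=0
\]
still has the $u$-dependent $g_{ki}=T_{kil}u^l+g^0_{ki}$ sitting as a coefficient, so as written it does not yet prove linearity. The missing step is to absorb that $u$-dependence into the Pl\"ucker coordinates $p^{il}=u^iV^l-u^lV^i$: since $T_{kil}$ is skew in its last two indices,
\[
  T_{kil}u^lV^i \;=\; \tfrac12\,T_{kil}\bigl(u^lV^i-u^iV^l\bigr),
\]
and this is a \emph{constant}-coefficient linear combination of the $p^{il}$. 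Substituting, the identity $g_{ki}V^i=A_{kl}u^l+B_k$ becomes
\[
  \tfrac12\,T_{kil}\,p^{li} + g^0_{ki}\,p^{i\,n+1} + A_{kl}\,p^{l\,n+2} + B_k\,p^{n+1\,n+2}=0,
\]
a system of $n$ linear equations in Pl\"ucker coordinates with constant coefficients $T_{kil}$, $g^0_{ki}$, $A_{kl}$, $B_k$. That is precisely the paper's argument. Your remark that $T_{kil}u^iu^l=0$ is true but beside the point; the relevant antisymmetrisation is in the product $u^lV^i$, not $u^iu^l$.
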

\begin{proof}
  By the previous Lemma we obtain that
  \begin{equation}
    A_{jl}u^l+B_j=(T_{jkl}u^l+g^0_{jk})V^k
  \end{equation}
  then
    \begin{equation}
      \frac{1}{2}T_{jkl}\left(u^lV^k-u^kV^l\right)+g^0_{jk}V^k=A_{jl}u^l+B_j
    \end{equation}
  This yields a system of $n$ linear relations between Pl\"ucker's coordinates
  describing the line congruence, hence the statement is proved.
\end{proof}

\begin{corollary}\label{cor:lindeg}
  The first-order quasilinear systems of PDEs compatible with second-order HHOs
  are linearly degenerate.
\end{corollary}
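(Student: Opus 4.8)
The plan is to deduce the statement directly from the Proposition just proved, namely that the line congruence $y^i = u^i y^{n+1} + V^i y^{n+2}$ associated with a system~\eqref{2a} compatible with a second-order HHO is linear. Recall that a quasilinear system $u^i_t = V^i_j(u)\, u^j_x$ is \emph{linearly degenerate} when, for every eigenvalue $\lambda$ of the matrix $(V^i_j)$ and every corresponding right eigenvector $\xi$, the eigenvalue is constant along that characteristic direction, $\xi^k\partial_k\lambda = 0$. The external ingredient is the dictionary between systems of conservation laws and congruences of lines developed by Agafonov and Ferapontov: in it the developable (focal) hypersurfaces of the congruence encode the characteristic directions of the system, and a congruence is linear precisely when the associated system is linearly degenerate. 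Combining that result with the preceding Proposition yields the corollary at once.

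If a self-contained argument is preferred, one can proceed as follows. From $V^i = g^{ij} W_j$ and the computation in the proof of Corollary~\ref{cor:eigenvalues} one has $g_{hi} V^i_j = T_{hjk} V^k + A_{hj}$, which is skew-symmetric in $h$, $j$; equivalently, $(V^i_j)$ is self-adjoint with respect to the symplectic form $g_{ij}$ (nondegenerate and skew, since $n$ is even). Hence its generalized eigenvalues are the roots in $\lambda$ of the Pfaffian polynomial $\operatorname{Pf}\big(T_{hjk} V^k + A_{hj} - \lambda(T_{hjk} u^k + g^0_{hj})\big)$, each of algebraic multiplicity two by Corollary~\ref{cor:eigenvalues}. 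One then has to show that differentiating this equation along a right eigenvector of $\lambda$ annihilates $\lambda$; this is best organised by passing to a Darboux frame for $g$ in which $(V^i_j)$ becomes block-diagonal with one small block per eigenvalue, and by using that $W_j = A_{jl}u^l + B_j$ is affine in $u$, which rigidly constrains those blocks.

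The genuine obstacle is this last step. Because every eigenvalue has algebraic multiplicity two, the naive route---reading off $d\lambda$ from the cofactor matrix of $(V^i_j - \lambda\,\delta^i_j)$ via the implicit function theorem---fails, since that cofactor matrix vanishes identically at a double root; one must instead extract second-order information, i.e.\ work with the Pfaffian of the skew pencil $T_{hjk} V^k + A_{hj} - \lambda g_{hj}$ rather than with its determinant, and in addition control the possible occurrence of nontrivial Jordan blocks (the systems are not a priori diagonalizable). Since the congruence-theoretic argument handles all of these cases uniformly, that is the route I would take, retaining the computation above only as a low-dimensional check.
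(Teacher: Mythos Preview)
Your primary argument is exactly the paper's: invoke the preceding Proposition to get that the associated line congruence is linear, and then cite the Agafonov--Ferapontov correspondence (your \emph{dictionary}) identifying linear congruences with linearly degenerate systems. The paper's proof is the one-line citation ``a straightforward consequence of the linearity of the corresponding congruence of lines \cite{agafonov96:_system}'', so your first paragraph already matches it; the additional self-contained route you sketch is not needed and, as you yourself note, runs into genuine technical obstacles (double eigenvalues, possible Jordan blocks) that the congruence argument bypasses.
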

\begin{proof}
  It is a straightforward consequence of the linearity of the corresponding
  congruence of lines \cite{agafonov96:_system}.
\end{proof}

\begin{remark}
  The fact that the associated congruence is linear does not imply that the
  system is non-diagonalizable (see the Remark in
  \cite[p.\ 1771]{agafonov99:_theor}).
\end{remark}

We can have a look at the case $n=2$ (already considered in
\cite{vergallo20:_homog_hamil}) using the theory of congruences.

\begin{proposition}
  For $n=2$ the system $V^i_j$ compatible with a second order HHO is
  linearisable.
\end{proposition}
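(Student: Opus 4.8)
The plan is to combine the explicit description of the flux in Theorem~\ref{th:solut-comp-cond} with the strong constraints that dimension two imposes on the leading coefficient. First I would note that for $n=2$ the totally skew-symmetric array $T_{ijk}$ in~\eqref{aa3} has all three indices in $\{1,2\}$ and so vanishes identically; hence the leading coefficient $g_{ij}=g^0_{ij}$ is \emph{constant} and, being non-degenerate and skew-symmetric, is a nonzero scalar multiple of the standard symplectic form, so the operator is the one in~\eqref{eq:18} and its inverse $g^{ij}$ is likewise constant and skew-symmetric.

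Next I would apply Theorem~\ref{th:solut-comp-cond}: every compatible flux has the form $V^i=g^{ij}W_j$ with $W_j=A_{jl}u^l+B_j$, where $A$ is constant and skew-symmetric and $B$ is constant. For $n=2$ the skew-symmetric matrix $A_{jl}$ is a scalar multiple of the (also skew-symmetric and constant) leading coefficient, $A_{jl}=a\,g_{jl}$, so
\[
  V^i = g^{ij}\bigl(a\,g_{jl}u^l+B_j\bigr) = a\,u^i + C^i,\qquad C^i:=g^{ij}B_j\ \text{constant.}
\]
Consequently $V^i_{,j}=a\,\delta^i_j$ is a constant matrix and the system $u^i_t=V^i_{,j}u^j_x$ reduces to $u^i_t=a\,u^i_x$: it is already linear, indeed a decoupled pair of transport equations, hence a fortiori linearisable.

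An alternative, more geometric argument in the spirit of the preceding discussion runs as follows: by the previous Proposition the associated congruence $y^i=u^iy^3+V^iy^4$ in $\mathbb{P}^3$ is linear, and a non-degenerate linear congruence of lines in $\mathbb{P}^3$ consists of all lines meeting two fixed skew directrices; putting the directrices into standard position forces the corresponding hydrodynamic system to be linear, recovering the same conclusion (cf.\ \cite{agafonov96:_system}). I do not anticipate any genuine obstacle here: in dimension two the general solution of~\eqref{eq:37} simply collapses to an affine-linear flux, and the only care required is elementary bookkeeping with the symplectic form.
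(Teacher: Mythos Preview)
Your argument is correct and in fact sharper than the paper's: you show that the system is already \emph{linear} (with velocity matrix $a\,\delta^i_j$), not merely linearisable after a transformation. The key observation that $T_{ijk}\equiv 0$ for $n=2$ forces $g$ to be constant, together with the fact that every skew $2\times 2$ matrix is a scalar multiple of $g$, makes the computation immediate from Theorem~\ref{th:solut-comp-cond}.

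The paper takes a genuinely different, more geometric route: it invokes the previous Proposition (linearity of the associated congruence in $\mathbb{P}^{n+1}$) and then appeals to the projective classification of linear line congruences in $\mathbb{P}^3$ to bring the congruence to a normal form, finally using condition~\eqref{eq:451} to pin down the system. Your direct algebraic approach avoids the congruence machinery entirely and is self-contained once Theorem~\ref{th:solut-comp-cond} is available; it also makes transparent \emph{why} nothing nonlinear can occur (there simply are no free skew-symmetric $3$-tensors in two variables). The paper's approach, on the other hand, illustrates the congruence viewpoint that becomes essential in higher dimensions, where $T_{ijk}$ is no longer forced to vanish. Your second paragraph already sketches this alternative and correctly identifies it with the classical picture of two skew directrices; that is essentially what the paper does.
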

\begin{proof}Let $n=2$. Then in $\mathbb{P}^3$ linear congruences can be
  brought (modulo projective transformations) to the form
  \begin{equation}
    y^1=u^2y^3+u^2y^3\qquad y^2=u^2y^3+u^1y^4
  \end{equation}
  By using the affine chart $y^4=1$ we have only two cases $(y^1=y^2, y^3=1)$
  or $(y^1=-y^2,y^3=-1)$. But by condition \eqref{eq:451} we obtain that the
  system is skew symmetric, then
  \begin{equation}
    \begin{cases}u^1_t=u^2_x\\u^2_t=-u^1_x\end{cases}
  \end{equation}
  In particular, every system $u^i_t=(V^i)_x$ compatible with a second order
  operator is linearisable.
\end{proof}

\medskip

We stress that an obvious alternative proof immediately follows from the
classification of second-order HHOs in Subection~\ref{sec:proj-class-hamilt-1}.
The same argument yields the following Proposition.

\begin{proposition}
  In the case $n=4$ systems of quasilinear first-order conservation laws
  that admit a second-order HHO are linearisable by projective reciprocal
  transformations.
\end{proposition}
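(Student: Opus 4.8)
The plan is to exploit the classification of non-degenerate second-order HHOs in dimension $n=4$ obtained in Subsection~\ref{sec:proj-class-hamilt-1}. By Theorem~\ref{th:corresp} such an operator corresponds to a $3$-form in $\mathbb{C}^5$; since we assume $\det(g)\neq 0$, the totally decomposable (closed) orbit is excluded, so only the open orbit survives. Its representative $\omega = dv^5\wedge(dv^1\wedge dv^2 + dv^3\wedge dv^4)$ yields, up to a conformal factor, the constant-coefficient operator $R=\partial_x g^{ij}\partial_x$ of~\eqref{eq:26}. Hence, for every second-order HHO $C$ in dimension $4$ there is a projective reciprocal transformation $\Phi$ sending $C$ to $R$.

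Next I would transport the given system along $\Phi$. Projective reciprocal transformations preserve the Hamiltonian property — this is precisely the computation carried out in the proof of Theorem~\ref{t2}, where the Poisson bracket is shown to survive the change of variables — and they send a first-order quasilinear system of conservation laws to a system of the same kind. Therefore $\Phi$ turns the system $u^i_t=(V^i)_x$, which is Hamiltonian with respect to $C$, into a system $\tilde u^i_{\tilde t}=(\tilde V^i)_{\tilde x}$ that is Hamiltonian with respect to $R$.

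Finally I would apply Theorem~\ref{th:solut-comp-cond} to the canonical operator $R$. Its leading coefficient $g^{ij}$ is \emph{constant}, so the flux of the transformed system is $\tilde V^i = g^{ij}\tilde W_j$ with $\tilde W_j = A_{jl}\tilde u^l + B_j$ affine in $\tilde u$, whence $\tilde V^i$ is itself affine in $\tilde u$. Consequently $\tilde u^i_{\tilde t}=(\tilde V^i)_{\tilde x}$ is a linear system with constant coefficients, i.e.\ the original system is linearised by the projective reciprocal transformation $\Phi$. This reproduces for $n=4$ the phenomenon already verified by direct inspection in the $n=2$ case treated above.

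The only point that requires genuine care — rather than a quotation — is the bookkeeping in the second step: one must check that applying a projective reciprocal transformation to $u^i_t=(V^i)_x$ again produces a conservative first-order quasilinear system, Hamiltonian with respect to the transformed operator. This is a routine reciprocal-transformation computation, entirely parallel to the manipulations in the proof of Theorem~\ref{t2}. Alternatively one can sidestep it: the space of fluxes compatible with a fixed second-order HHO is finite-dimensional (Proposition~\ref{pr:solut-comp-cond}), and the $SL(5)$-action on $3$-forms of Theorem~\ref{th:corresp} intertwines the action on operators with the action on fluxes, so the whole family compatible with $C$ is carried onto the family compatible with $R$, every member of which is linear.
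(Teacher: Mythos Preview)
Your argument is correct and follows precisely the route the paper intends: the text immediately preceding this Proposition says that ``an obvious alternative proof immediately follows from the classification of second-order HHOs in Subsection~\ref{sec:proj-class-hamilt-1}'' and that ``the same argument yields the following Proposition,'' which is exactly what you do --- reduce the operator to the constant representative~\eqref{eq:26} via the classification and then read off from Theorem~\ref{th:solut-comp-cond} that every compatible flux is affine. Your explicit discussion of why the transformed system is again conservative and Hamiltonian with respect to the transformed operator simply makes precise a step the paper leaves implicit.
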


\subsection{Hamiltonian systems}
\label{sec:hamiltonian-systems-1}

We would like to find an Hamiltonian for the systems that we found in
Section~\ref{sec:hamiltonian-systems}. To this aim we observe that, in
potential coordinates $u^i=b^i_x$, the second-order HHOs~\eqref{1b} in potential
coordinates becomes the ultralocal operator $P^{ij}= - g^{ij}(b^k_x)$, and the
system of first-order conservation laws \eqref{2a} becomes $b^i_t = V^i(b^k_x)$.
Solving with respect to $H$ the equation:
\begin{equation}
  - g^{ik}\frac{\delta H}{\delta b^k}=V^i,
\end{equation}
or
\begin{equation}
  \frac{\delta H}{\delta b^k}= - A_{ks}b_x^s-B_k,
\end{equation}
yields the following result.

\begin{proposition}\label{pr:hamiltonian-systems-2}
  We have
  \begin{equation}
    H=-\int{\left(\frac{1}{2}A_{sl}b^l_x+B_s\right)b^sdx}
  \end{equation}
\end{proposition}
\begin{remark}
  At difference with the third-order case \cite{FPV17:_system_cl} we observe
  that there are no non-trivial nonlocal Casimirs, as the equation
  \begin{equation}
    -g^{ik}\frac{\delta F^j}{\delta b^k}=0
  \end{equation}
  has no non-trivial solutions.
\end{remark}

\subsection{The Haantjes tensor and integrability}
\label{sec:haantj-tens-integr}

For general quasilinear first-order systems of the type
$u^i_t = V^i_j(u^k)u^j_x$ it is known~\cite{haantjes55:_x} that the Haantjes
tensor
\begin{equation}
  H^i_{jk}=N^i_{pr}V^p_jV^r_k-N^p_{jr}V^i_pV^r_k
  - N^p_{rk}V^i_pV^r_j + N^p_{jk}V^i_rV^r_p,
\end{equation}
where $N^i_{jk}$ is the Nijenhuis tensor
\begin{equation}\label{nij}
  N^i_{jk}=V^p_jV^i_{kp}-V^p_{k}V^i_{jp}-V^i_p(V^p_{kj}-V^p_{jk}),
\end{equation}
is related with the diagonalizability of the quasilinear system of PDEs
\eqref{2a}. More precisely, if $H^i_{jk}=0$ and the eigenvalues of the velocity
matrix $V^i_{,j}$ have the same algebraic multiplicity as their geometric
multiplicity, then the system of PDEs is diagonalizable.

It is possible to compute the Haantjes tensor for our conservative
systems~\eqref{2a}. We have the following result.
\begin{theorem}\label{haant}
  The Haantjes tensor of a conservative quasilinear system~\eqref{2a} that
  admits a second-order homogeneous Hamiltonian operator is identically
  vanishing.
\end{theorem}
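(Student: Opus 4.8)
The plan is to exploit the explicit form of the flux $V^i = g^{ij}W_j$ with $W_j = A_{jl}u^l + B_j$ (Theorem~\ref{th:solut-comp-cond}) together with the highly structured shape of $g_{ij}$ coming from \eqref{aa3}, namely $g_{ij} = T_{ijk}u^k + g^0_{ij}$ with $T$ and $g^0$ constant and totally skew-symmetric. The key observation is that the velocity matrix, after lowering an index (as in the proof of Corollary~\ref{cor:eigenvalues}), is $g_{hi}V^i_{,j} = T_{hji}V^i + A_{hj}$, so $g_{hi}V^i_{,j}$ is skew-symmetric in $h,j$. Equivalently, $V_{h,j} := g_{hi}V^i_{,j}$ satisfies $V_{h,j} = -V_{j,h}$, which is just condition~\eqref{eq:451}. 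I would work throughout with this lowered object and translate the Nijenhuis and Haantjes tensors into statements about it.

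First I would compute $V^i_{,jk}$. Differentiating $g_{hi}V^i_{,j} = T_{hji}V^i + A_{hj}$ in $u^k$ and using $g_{hi,k} = T_{hik}$ gives $T_{hik}V^i_{,j} + g_{hi}V^i_{,jk} = T_{hji}V^i_{,k}$, hence $g_{hi}V^i_{,jk} = T_{hji}V^i_{,k} - T_{hik}V^i_{,j}$. The right-hand side is manifestly symmetric in $j,k$ (check by expanding $T_{hji}V^i_{,k} - T_{hik}V^i_{,j}$ and swapping $j\leftrightarrow k$: it is symmetric precisely because $T$ is skew in its first two slots), which re-proves $V^i_{,jk}=V^i_{,kj}$ but, more importantly, gives a closed formula for the second derivatives purely in terms of $T$, $V^i_{,k}$ and $g^{-1}$. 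Substituting into the Nijenhuis tensor \eqref{nij}: since $V^i_{,jk}$ is symmetric the last term $V^i_p(V^p_{kj}-V^p_{jk})$ vanishes, so $N^i_{jk} = V^p_{,j}V^i_{,kp} - V^p_{,k}V^i_{,jp}$, and using the formula above one finds $g_{hi}N^i_{jk} = V^p_{,j}(T_{hpl}V^l_{,k} - T_{hlk}V^l_{,p}) - V^p_{,k}(T_{hpl}V^l_{,j} - T_{hlj}V^l_{,p})$. Grouping terms, this collapses to $g_{hi}N^i_{jk} = 2T_{hpl}V^p_{,j}V^l_{,k} - T_{hlk}V^l_{,p}V^p_{,j} + T_{hlj}V^l_{,p}V^p_{,k}$ — up to careful bookkeeping of which indices are contracted — which is a bilinear expression in $V^p_{,\bullet}$ with coefficients built only from $T$. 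The upshot I am aiming for is a clean identity of the form $g_{hi}N^i_{jk} = T_{hpl}\,V^p_{,j}V^l_{,k} + (\text{terms with } V^l_{,p}V^p \text{-type contractions})$, which I can then feed into the Haantjes tensor.

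The main obstacle will be the final algebraic collapse: the Haantjes tensor is a degree-four contraction of $N$ with four copies of $V^i_{,j}$, so after substitution one gets a sum of many $T$-weighted monomials of fourth degree in $V^i_{,k}$, and the claim is that they all cancel. The cancellation should be forced by two structural facts: (i) the total skew-symmetry of $T$, which kills any term where $T$ is contracted twice against the \emph{same} vector $V^p_{,\bullet}$ on two of its slots, and (ii) the skew-symmetry $V_{h,j}=-V_{j,h}$ established above, which lets one symmetrize/antisymmetrize pairs of lowered velocity matrices. Concretely I would lower the free index $i$ in $H^i_{jk}$ with $g$, write everything in terms of $V_{h,j}$ and $T$, and then argue that each of the four terms $N^i_{pr}V^p_jV^r_k$, $-N^p_{jr}V^i_pV^r_k$, $-N^p_{rk}V^i_pV^r_j$, $N^p_{jk}V^i_rV^r_p$ either vanishes individually by skew-symmetry or pairs off with another. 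An alternative, and probably cleaner, route for the write-up: since by Corollary~\ref{cor:lindeg} the system is linearly degenerate and (via the classification, at least in the cases computed) diagonalizable with eigenvalues of multiplicity two, one could invoke the known fact that a diagonalizable linearly degenerate system has vanishing Haantjes tensor; but to get the result for \emph{all} parameter values without relying on the case-by-case classification, the direct computation sketched above is the honest path, and I expect it to reduce — after a page of index manipulation — to an identity in $T$ alone that holds by total antisymmetry.
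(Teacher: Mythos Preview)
Your approach is essentially the same as the paper's: both derive the lowered Nijenhuis tensor in the form $g_{hi}N^i_{jk} = T_{hki}V^i_{,p}V^p_{,j} - T_{hji}V^i_{,p}V^p_{,k} + 2T_{hip}V^i_{,j}V^p_{,k}$ (the paper writes it with the index raised, but it is the same formula), substitute into $H^i_{jk}$, and then cancel the resulting degree-four monomials pairwise using the skew-symmetry of $T$ together with \eqref{eq:451}. The paper is simply more explicit about the last step, exhibiting the specific pairs that annihilate, whereas you leave this as ``a page of index manipulation''; your sketch is correct, and your alternative route via diagonalizability is rightly discarded for the reason you give.
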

\begin{proof}
  It is easy to prove the following identity
  \begin{equation}\label{eq:5}
    g^{ia}_{,k}g_{aj}=-g^{ia}T_{ajk}
  \end{equation}
  Then, from \eqref{eq:38} we have $V^a_{pl}=-g^{aq}(T_{pqk}V^k_{,l}
  +T_{qkl}V^k_{,p})$, hence the Nijenhuis tensor can be written as
  \begin{equation}\label{eq:6}
    N^i_{jk}=g^{ia}\left(T_{jal}V^l_pV^p_k-T_{kal}V^l_pV^p_j
      -2T_{alp}V^l_kV^p_j\right).
  \end{equation}
  Using~\eqref{eq:6} we obtain
  \begin{equation}
    \label{han1}
    \begin{split}
        H^i_{kj}=&g^{ia}\left(T_{pal}V^l_{s}V^s_r-T_{ral}V^l_sV^s_p
      -2T_{als}V^l_rV^s_p\right)V^p_kV^r_j+\\
    &-g^{pa}\left(T_{kal}V^l_sV^s_r-T_{ral}V^l_sV^s_k
      -2T_{als}V^l_rV^s_k\right)V^i_pV^r_j+\\
    &-g^{pa}\left(T_{ral}V^l_sV^s_j-T_{jal}V^l_sV^s_r
      -2T_{als}V^l_jV^s_r\right)V^i_pV^r_k+\\
    &+g^{pa}\left(T_{kal}V^l_sV^s_j-T_{jal}V^l_sV^s_k
      -2T_{als}V^l_jV^s_k\right)V^i_rV^r_p
  \end{split}
\end{equation}
Let us consider, for example, the summand
\begin{equation}\label{Ss1}
  S=-2g^{ia}T_{als}V^l_rV^s_pV^p_kV^r_j+2g^{pa}T_{als}V^l_rV^s_kV^i_pV^r_j.
\end{equation}
It is clear that $S=0$ if $-2g^{ia}T_{als}V^s_pV^p_k +
2g^{pa}T_{als}V^s_kV^i_p=0$. Now, we use the identities~\eqref{eq:5},
\eqref{eq:4} and the upper indices version $g^{il}V^j_{,l} + g^{jl}V^i_{,l} =
0$ to prove that $g^{ia}T_{als}V^s_pV^p_k = g^{pa}T_{als}V^s_kV^i_p$, so that
$S=0$.

With similar algebraic manipulations it is easy to prove that the following
pairs of summands annihilate:
\begin{align}
  &+2g^{pa}T_{als}V^l_jV^s_rV^i_pV^r_k-2g^{pa}T_{als}V^l_jV^s_kV^i_rV^r_p=0,
  \\
  &+g^{ia}T_{pal}V^l_sV^s_rV^p_kV^r_j-g^{pa}T_{kal}V^l_sV^s_rV^i_pV^r_j=0,
  \\
  &-g^{ia}T_{ral}V^l_sV^s_pV^p_kV^r_j+g^{pa}T_{jal}V^l_sV^s_rV^r_kV^i_p=0.
  \\
  &+g^{pa}T_{kal}V^l_sV^s_jV^i_rV^r_p-g^{pa}T_{ral}V^l_sV^s_jV^i_pV^r_k=0,
  \\
  &-g^{pa}T_{jal}V^l_sV^s_kV^i_rV^r_p+g^{pa}T_{ral}V^l_sV^s_kV^i_pV^r_j=0.
\end{align}
\end{proof}

Summarizing, we have obtained that each second-order HHO defines a
multiparametric family of systems of conservation laws which is completely
characterized by Theorem~\ref{th:solut-comp-cond}. Since the velocity matrix of
the systems always has double eigenvalues in a generic situation
(Corollary~\ref{cor:eigenvalues}), according to Haantjes' Theorem
\cite{haantjes55:_x} the systems are diagonalizable if and only if there exist
a two-dimensional subspace of eigenvectors corresponding to every eigenvalue.

Unfortunately, we were not able to provide a generic proof on the
diagonalizability of the above systems. However, we can summarize the situation
in the following list.
\begin{description}
\item[$n=2$] All systems are linear: there is only one (non-degenerate)
  second-order HHO.
\item[$n=4$] All systems are linearizable by a projective reciprocal
  transformation, as the only nondegenerate representative of the second-order
  HHO in the classification is a constant coefficient $2$-form.
\item[$n=6$] Computational experiments show that for every equivalence class in
  the classification an explicit but random choice of the coefficients that
  determine the system leads to a diagonalizable velocity matrix.
\item[$n=8$] Again, computational experiments performed on randomly chosen
  members of the third family show that they are diagonalizable. More generic
  families need a more sophisticated computational approach in order to tame
  their complexity.
\end{description}
Given the results of the above computational experiments, we can
\emph{conjecture} that \emph{all} systems corresponding to second-order HHOs
are diagonalizable.

If the above conjecture is true, it turns out that all the above systems are
semi-Hamiltonians, as they are conservative and diagonalizable, by a theorem in
\cite{sevennec93:_geomet}. This already implies that such systems are
integrable, as they have infinitely many commuting flows (symmetries).

However, strictly speaking we cannot use the results of \cite{tsarev91:_hamil}
and conclude that the systems can be solved by the generalized hodograph
method, as it was developed for semi-Hamiltonian systems whose velocity matrix
has \emph{distinct} eigenvalues.

Despite the lack of general results on systems of our type, again computational
experiments show that we can solve randomly chosen systems corresponding to
second-order HHOs by the generalized hodograph method. So, it seems very
reasonable to make a stronger conjecture that systems corresponding to
second-order HHOs are \emph{integrable}.

\medskip

\textbf{Acknowledgements.} We thank R. Chiriv\`\i, E.V. Ferapontov,
P. Lorenzoni and M.V. Pavlov for useful discussions. This research has been
partially supported by the Department of Mathematics and Physics of the
Universit\`a del Salento, GNFM of the Istituto Nazionale di Alta Matematica
(INdAM), the research project Mathematical Methods in Non Linear Physics
(MMNLP) by the Commissione Scientifica Nazionale -- Gruppo 4 -- Fisica Teorica
of the Istituto Nazionale di Fisica Nucleare (INFN). PV aknowledges the support
of PRIN 2017 \textquotedblleft Multiscale phenomena in Continuum
Mechanics: singular limits, off-equilibrium and transitions\textquotedblright,
project number 2017YBKNCE.


\providecommand{\cprime}{\/{\mathsurround=0pt$'$}}
  \providecommand*{\SortNoop}[1]{}

\end{document}